\documentclass[sigconf]{acmart}

\usepackage{tikz}
\usepackage{pgfplots}
\usepackage{forest}
\usepackage{xspace}
\usepackage{subcaption}
\usepackage{relsize}
\usepackage{amsmath}
\usepackage{graphicx}
\usepackage[english]{babel}
\usepackage{amsthm}
\usepackage[linesnumbered,ruled,vlined]{algorithm2e}
\newtheorem{theorem}{Theorem}[section]
\usepackage{tikz}
\usepackage{balance}

\usepackage{listings}
\usepackage{float} 

\lstdefinestyle{mystyle}{
    breaklines=true,        
    basicstyle=\ttfamily\footnotesize,  
    frame=single,           
    keepspaces=true,        
    columns=flexible,       
    captionpos=b            
}

\lstset{style=mystyle}

\newcommand*\circled[2]{\tikz[baseline=(char.base)]{
            \node[shape=circle,draw,inner sep=2pt, fill=#2] (char) {#1};}}

\AtBeginDocument{%
  }

\setcopyright{acmlicensed}
\copyrightyear{2018}
\acmYear{2018}
\acmDOI{XXXXXXX.XXXXXXX}

\acmConference[Conference acronym 'XX]{Make sure to enter the correct
  conference title from your rights confirmation emai}{June 03--05,
  2018}{Woodstock, NY}
\acmISBN{978-1-4503-XXXX-X/18/06}

\newcommand{\system}{$\lambda$-Tune\xspace}




\begin{document}

\title{\system: Harnessing Large Language Models for Automated Database System Tuning }

\author{Victor Giannakouris}
\affiliation{%
  \institution{Cornell University}
  \city{Ithaca, NY}
  \country{USA}
}
\email{vg292@cornell.edu}

\author{Immanuel Trummer}
\affiliation{%
  \institution{Cornell University}
  \city{Ithaca, NY}
  \country{USA}
}
\email{it224@cornell.edu}

\newcommand\mycommfont[1]{\footnotesize\ttfamily\textcolor{blue}{#1}}
\SetCommentSty{mycommfont}

\SetKwInput{KwInput}{Input}                
\SetKwInput{KwOutput}{Output}              


\begin{abstract}
We introduce \system, a framework that leverages Large Language Models (LLMs) for automated database system tuning. The design of \system is motivated by the capabilities of the latest generation of LLMs. Different from prior work, leveraging LLMs to extract tuning hints for single parameters, \system generates entire configuration scripts, based on a large input document, describing the tuning context. \system generates alternative configurations, using a principled approach to identify the best configuration, out of a small set of candidates. In doing so, it minimizes reconfiguration overheads and ensures that evaluation costs are bounded as a function of the optimal run time. By treating prompt generation as a cost-based optimization problem, \system conveys the most relevant context to the LLM while bounding the number of input tokens and, therefore, monetary fees for LLM invocations. We compare \system to various baselines, using multiple benchmarks and PostgreSQL and MySQL as target systems for tuning, showing that \system is significantly more robust than prior approaches.
\end{abstract}



\keywords{Database, Tuning, Large Language Models, Physical Design}

\newcommand{\rev}[1]{\textcolor{red}{#1}}

\maketitle
\footnotetext{© Victor Giannakouris, Immanuel Trummer | ACM 2025. This is the author's version of the work. It is posted here for your personal use. Not for redistribution. The definitive Version of Record was published in SIGMOD 2025, http://dx.doi.org/10.1145/{number}.}

\section{Introduction}
The performance of database management system changes dramatically as a function of various tuning choices, including settings for system configuration parameters as well as physical design choices such as indexing, sorting, or partitioning. This has motivated a large body of research on automated database system tuning. Recent work exploits machine learning to find near-optimal configurations~\cite{pavlo2017self, wang2021udo, ding2019ai, giannakouris2022building} but suffers from high training and exploration overheads. This has motivated a new line of research~\cite{trummer2022db, lao2023gptuner}, exploiting LLMs to heuristically prune the search space for tuning. Similar to human database administrators, such models leverage commonsense knowledge, extracted from text documents, to narrow the focus to tuning options that seem ``reasonable'', given the tuning context. This paper presents \system (LAnguage Models for Better Database Administration), a system that exploits capabilities offered by the latest generation of LLMs, including the likes of GPT-4 and Claude~3, to optimize various tuning choices for specific systems and OLAP workloads, including system parameter settings as well as physical design decisions.
\\
\textbf{\system.} Prior approaches to LLM-enhanced database tuning~\cite{lao2023gptuner, trummer2022db} parse text documents (e.g., the database manual) to extract value recommendations for specific parameters. They still need to perform an optimization stage in which hints about specific parameters are combined into complete configurations. This approach is in line with the limitations of early-stage language models such as BERT~\cite{devlin2018bert} and GPT-2~\cite{radford2018improving}. For those models, input and output sizes are limited to a few hundred tokens, restricting the scope of these models to settings for single parameters (rather than entire configurations). Modern LLMs such as GPT-4 support input and output sizes of hundreds of thousands of tokens. The design of \system is motivated by these advances. It exploits increased input sizes by feeding to the language model a description of all information relevant for tuning, including the workload and target system. It also exploits the increased output size by generating entire configurations, rather than hints about single parameters. As shown in our experiments, modern LLMs such as GPT-4 are typically able to map information about the workload to efficient database configuration settings. Hence, unlike prior systems, \system avoids expensive optimization steps, combining settings for single parameters. Instead, it delegates more responsibility to the language model itself.
\\
\textbf{Prompt Generation.} First, \system automates the \emph{prompt generation} step by crafting prompts tailored to the input workload (analytical SQL queries), hardware specifications, and the database system. Our approach incorporates a workload representation method that decomposes the input SQL queries into much smaller, mergeable components, called \emph{query snippets}. As costs increase in the prompt size, minimizing monetary fees while conveying the most relevant information is challenging. We select the most informative subset of snippets to include in the prompt, given a bound on the number of prompt tokens (which are proportional to processing fees for providers like OpenAI). We formulate workload representation as a cost-based optimization problem that we solve by a transformation to integer linear programming. Using the resulting prompt, \system issues multiple calls to the LLM with a certain degree of randomization to obtain multiple candidate configurations. By running the input queries with different configurations, \system evaluates and identifies the most efficient configuration among them using the ideas discussed next.
\\
\textbf{Configuration Selection.} The LLM may return configurations of varying quality. In this context, a challenge is to avoid slowdowns due to particularly bad configurations, incurred, for instance, when evaluating configurations sequentially. To tackle this challenge, we introduce a configuration selection approach that incrementally evaluates the obtained configurations in multiple rounds. Each round comes with a timeout that limits the impact of bad configurations on tuning time. On the other hand, interrupting execution repeatedly may cause redundant work. \system chooses timeouts according to a geometric progression scheme, limiting wasted work due to interruptions before the final round. At the same time, it avoids re-evaluating the same queries across multiple rounds and calculates configuration-specific timeouts, taking into account work accomplished in prior rounds. Reconfiguration overheads, e.g., index creations, may dominate query evaluation time if switching between configurations with a high frequency. Hence, \system adapts query evaluation timeouts to ensure that reconfiguration overheads are proportional to query run time.
\\
\textbf{Configuration Evaluation.} Changing between different configurations can be costly, in particular if it involves index creations. This makes it challenging to keep switching overheads low during evaluations. \system minimizes these overheads by utilizing a \emph{lazy index creation} approach, that only creates the indexes before the execution of a query that might use them, according to the referenced column. At the same time, \system optimally orders query execution according to their index creation costs using a dynamic-programming-based query scheduler, which minimizes query reconfiguration costs when switching between different configurations. Our algorithm is based on a custom cost model we built for our query scheduling needs. We prove that the principle of optimality applies to this cost model in Section~\ref{algo:config_evaluator}.

We evaluate \system over Postgres and MySQL, using the Join Order Benchmark (JOB) and TPC-H as benchmarks. Our experimental evaluation illustrates \system's robustness,  outperforming prior tools for automated database system tuning, including GPTuner~\cite{lao2023gptuner}, DB-Bert~\cite{trummer2023demonstrating} UDO~\cite{wang2021udo}, LlamaTune~\cite{Kanellis2022}, as well as ParamTree~\cite{Yang2023}. In summary, our original scientific contributions are the following:
\begin{itemize}
    \item We present \system, a framework that harnesses Large Language Models for automated, database system tuning for Online Analytical Processing (OLAP) workloads.
    \item We introduce three powerful components that facilitate our LLM-assisted tuning approach, including prompt engineering, configuration selection, and configuration evaluation.
    \item We present a thorough experimental evaluation that showcases that \system is the most robust among its competitors, by consistently identifying the configuration that achieves the best performance.
\end{itemize}

The rest of this paper is organized as follows. Section~\ref{sec:overview} presents an overview of \system, its design, and main goals. Next, in Section~\ref{sec:prompt-gen}, we describe the prompt generation component, which includes our workload compression method. In Section~\ref{sec:config_select}, we present our configuration selection approach. Next, Section~\ref{sec:config_eval} presents the configuration evaluation component. Section~\ref{sec:experiments} presents our experimental evaluation of \system compared to three baselines, as well as an ablation study that showcases the effectiveness of the \system's individual components. Finally, in Sections~\ref{related} and \ref{sec:conclusions} we present the related work before concluding.
\label{architecture}
\section{\system}
\label{sec:overview}
\textbf{Overview.} \system's architecture is depicted in Figure~\ref{fig:lambda_tune}. \system leverages LLMs to automate the tuning of database systems for OLAP workloads, ensuring to find the configuration that achieves the optimal performance, among the configurations obtained from the LLM. Existing approaches to automated database system tuning depend on the availability of training data, excessive training overheads, and tuning rounds. \system is built on the intuition that the \emph{whole tuning task can be described as a concise prompt} and \emph{downstreamed to an LLM} which already contains and can blend domain-specific tuning information using its pre-trained weights. While not done in the current system, this approach could easily be augmented via retrieval augmented generation, enabling the LLM to parse additional information from the Web. \system takes as input three parameters: an OLAP workload $\mathcal{W} = \{\allowbreak q_1, \allowbreak q_2, \allowbreak ..., \allowbreak q_n\}$ consisting of $n$ queries, a hardware specification $\mathcal{H}$ consisting of the number of cores and memory in the system, and a database system name $\mathcal{D}$. It integrates these three parameters into a prompt tailored for the given setup and obtains configurations from the LLM to optimize the performance of the target system. Optionally, the user can define a token budget $\mathcal{B}$ for the prompt generator, if they wish to restrict the API costs, otherwise, \system will try to fit as much information as possible into the prompt, according to the language model token limit.
\\
\textbf{Tuning Pipeline.} Algorithm~\ref{algo:tuning_pipeline} presents the tuning pipeline of \system. The first step is to pass the input parameters (an OLAP workload, hardware specification, and the target database system) to the prompt generator, described in Section~\ref{sec:prompt-gen}. The prompt generator will first compress the input workload, as described in Section~\ref{sec:prompt-gen:compression}, in order to break down the input queries into smaller text snippets that contain information about specific operators like joins or selections. Then, it will select and combine the most informative snippets for the LLM, with respect to the token budget. Next, the prompt generator transforms and embeds the compressed workload, along with the rest of the input parameters, into a prompt that describes the workload, the database system, and the hardware to the LLM. Next, it invokes the LLM $k$ times, in order to retrieve $k$ responses, each one including a single full configuration. The retrieved configurations will differ according to the degree of randomization of the LLM, determined by the temperature. Each configuration contains a set of SQL commands, compliant with the target database. For instance, if the target system is Postgres, the configuration will typically consist of a list of ``\texttt{CREATE INDEX}'' and ``\texttt{ALTER SYSTEM SET \$param\_name = \$value}'' commands. \system is designed with the assumption that some of the retrieved configurations might be disproportionately slower than the efficient ones. To handle such scenarios, we use an approach that evaluates the retrieved configurations in multiple rounds with a given per-round timeout, preventing inefficient configurations from monopolizing the whole tuning process. This approach provides \emph{provable time guarantees} that are \emph{bounding the tuning time by a function of the optimal execution time (among all configurations retrieved from the LLM)}, as we discuss in Section~\ref{sec:config_select}. Furthermore, to minimize index reconfiguration overheads during evaluation, we first associate indexes with queries that could exploit them, based on column references, and create them lazily, only before an associated query execution. To minimize index reconfiguration costs, we optimally order the queries according to their index generation costs, using a dynamic-programming algorithm presented in Section~\ref{algo:config_evaluator}.

\begin{figure}
    \centering
    \includegraphics[scale=0.35]{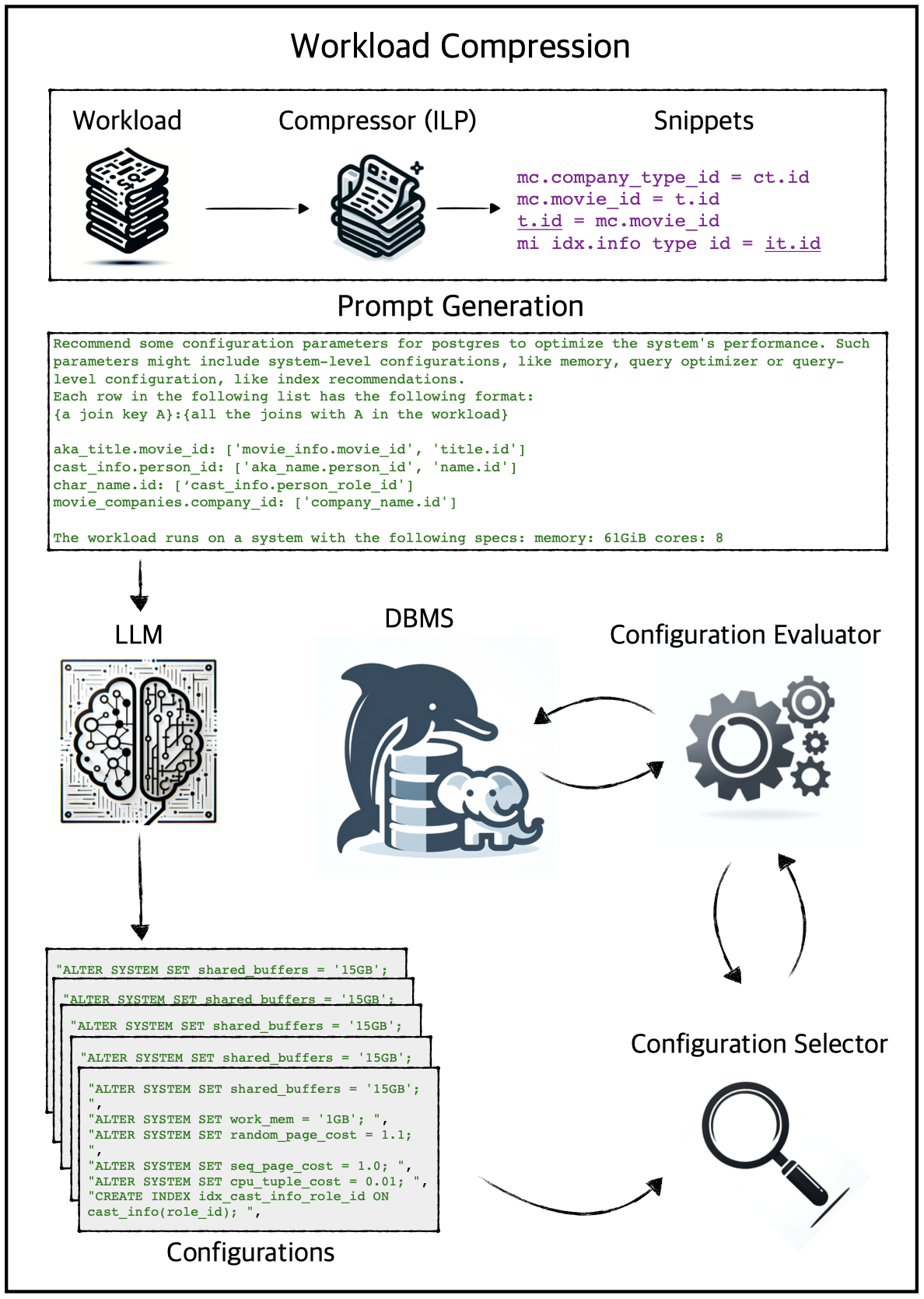}
    \caption{\system Architecture}
    \label{fig:lambda_tune}
\end{figure}

\begin{algorithm}[!ht]
\DontPrintSemicolon
  \SetKwFunction{FConfSel}{Tune}
  \SetKwProg{Fn}{Function}{:}{}
  \Fn{\FConfSel{$\mathcal{W}$, $\mathcal{H}$, $\mathcal{D}$, $\mathcal{B}$}}{
        \tcc{$\mathcal{W}$: The input workload (queries)}
        \tcc{$\mathcal{H}$: The hardware specification (cores, memory)}
        \tcc{$\mathcal{D}$: The database system (e.g. "Postgres", "MySQL")}
        \tcc{$\mathcal{B}$: The token budget}
        \vspace{0.25cm}
        \tcc{Generate the prompt}
        $prompt = GeneratePrompt(\mathcal{W}$, $\mathcal{H}$, $\mathcal{D}$, $\mathcal{B})$\;
        \tcc{Send $n$ API calls to the LLM}
        $C = LLM(prompt, n)$\;
        \tcc{Find the best configuration}
        $best = ConfigSelect(\mathcal{W}, C, t, max\_rounds)$\;
        \vspace{0.25cm}
        \KwRet $best$\;
  }
\caption{\system}
\label{algo:tuning_pipeline}
\end{algorithm}
\section{Prompt Generation}
\label{sec:prompt-gen}

We describe details of \system's prompt generation process, including the prompt template used as well as our approach for generating a compressed representation of the input workload.

\subsection{Prompt Template}

Listing~\ref{lst:template} depicts our prompt template. Placeholders that are substituted anew for each tuning problem instance are surrounded by curly braces (\verb|${...}|). The prompt template starts with general instructions about the task we expect the LLM to solve: database tuning. The instructions are fairly generic while providing examples of several tuning choices, e.g., related to indexing or memory allocation, that tend to have a significant impact on performance. The placeholder \verb|${DBMS}$| is replaced by the name of the target database management system to tune (e.g., PostgreSQL or MySQL). Due to knowledge gained via pre-training, large LLMs such as GPT-4 are able to adapt the commands for configuration changes to the target system without further instructions.

The next block contains an aggregate description of the input workload. Providing SQL queries directly leads to significant costs for large workloads (since processing fees for LLMs, hosted by providers such as OpenAI, are proportional to input and output sizes). Hence, we provide a compressed representation instead, focusing only on the most important workload aspects while representing information as concisely as possible. As justified in more detail in the following, we focus on describing the join structure of the input workload in the placeholder \verb|${COMPRESSED_WORKLOAD}$|. The preceding sentence provides the LLM with instructions on how to interpret the following, compressed representation.

The final text block of the prompt template contains information on the hardware properties of the target system. For our use cases, we find it sufficient to include the amount of main memory and the number of CPU cores. The prompt template can be extended easily to integrate more details on the hardware.

\begin{lstlisting}[caption={\system Prompt Template}, label={lst:template}, float]
Recommend some configuration parameters for ${DBMS}$ to 
optimize the system's performance. Parameters might 
include system-level configurations, like memory, 
query optimizer or physical design configurations, 
like index recommendations.

Each row in the following list has the following format: 
{a join key A}:{all the joins with A in the workload}
${COMPRESSED_WORKLOAD}$

The workload runs on a system with the following specs: 
memory: ${MEMORY} 
cores: ${CORES}
\end{lstlisting}

\subsection{Workload Compression}
\label{sec:prompt-gen:compression}

We present an approach to summarize input workloads concisely for the LLM, thereby reducing LLM-related processing fees. Our description focuses on binary relationships between different parts of the data. For instance, such relationships could describe the collocation of tables on the same machine in a distributed setting, the co-occurrences of tables in the same queries, or the connections between column pairs that appear in the same join conditions. This type of information is important for tuning decisions such as data partitioning, indexing, and replication. While our approach easily extends to each of the aforementioned properties, we specifically use it to represent join conditions in the current implementation. The motivation for this choice is the fact that joins tend to be among the most expensive operators. Therefore, providing information that helps to optimize the database setup for reduced join overheads is a priority.

Denote by $P\subseteq\{\langle c_1,c_2\rangle|c_1,c_2\in C\}$ the set of pairs of join columns from $C$ that appear together in a join condition in the input workload. The first possible representation is simply the list of these column pairs. However, this representation is sub-optimal as it requires more space than necessary. Instead, it is preferable to merge column pairs that share at least one common column. We use a compressed representation that pairs up a specific column $c_1$ with each column $c_2$ that appears together with $c_1$ in a join predicate. In the prompt, we associate each line of the workload description with one column on the left-hand side, separated by a colon from a comma-separated list of associated columns on the right-hand side. By providing the LLM with instructions on representation semantics (see Listing~\ref{lst:template}), we enable the LLM to correctly interpret the workload summary.

\begin{example}
Assume that $P$ contains the following column pairs: $\langle A,B\rangle$, $\langle A,C\rangle$, and $\langle A,D\rangle$. In a compressed representation, these three binary relationships are summarized in a single line of the prompt: \verb|A:B,C,D|.
\end{example}

Even with the aforementioned compression techniques, it is still not possible to represent the full join structure of large workloads with diverse join conditions. To comply with intrinsic limits on the number of input tokens, associated with all LLMs, as well as with budget constraints of users (since processing more input is more costly), it is necessary to choose which subset of join conditions to represent in the prompt. Our selection strategy is based on the intuition that conditions associated with more expensive joins are more important. The current cost of a join gives an upper bound on how much cost can be reduced via tuning. If the LLM is unaware of the most expensive joins, it cannot effectively reduce processing overheads. Hence, we associate each join column pair $p\in P$ with a value $V(p)$ that represents the total cost associated with joins that use the corresponding condition. We calculate $V(p)$ as the sum $\sum_{j\in J(p)}EC_j$ where $J(p)$ is the set of all join operators in which join condition $p$ is evaluated (considering the default plans chosen by the query optimizer) and $EC_j$ the estimated processing cost, associated with join operation $j$ (this cost can be obtained from the optimizer using corresponding \verb|EXPLAIN| commands). Now, given a limit on the number of tokens used to represent the workload and values $H_c$, representing the number of tokens required to represent column $c\in C$, our goal is to select query snippets, in the form of join conditions, that maximize the accumulated value of join conditions conveyed to the LLM.

\subsection{ILP Formulation}

Picking an optimal combination of join conditions to include in the prompt under a constraint on the number of tokens is a non-trivial problem. Even without considering the possibility of compressing multiple join conditions sharing the same column, it relates to the knapsack problem (weights correspond to token consumption $H$ and utility to the processing cost values $V$) which is NP-hard. Hence, we transform the problem into an integer linear programming problem (ILP) to apply corresponding software solvers.

We introduce binary decision variables $L_c$ for each column $c\in C$, indicating whether or not the corresponding column appears on the left-hand side of a line in the prompt. Also, we introduce binary variables $R_p$ for $p=\langle c_1,c_2\rangle\in P$, indicating whether or not $c_2$ appears on the right-hand side of $c_1$ in a line in the prompt. Clearly, there are dependencies between the two groups of variables. If $R_{\langle c_1,c_2\rangle}$ is set to one, indicating that $c_2$ appears to the right of $c_1$, the associated variable $L_{c_1}$ must be set to one as well: $L_{c_1}\geq R_{p}$ for all $p=\langle c_1,c_2\rangle\in P$. Similarly, if column $c$ appears on the left-hand side, i.e., $L_c=1$, we can prune the search space by imposing at least one associated column on the right-hand side: $L_{c_1} \le \sum_{\langle c_1,c_2\rangle \in P} R_{\langle c_1,c_2\rangle}$. Our goal is to maximize accumulated value while limiting token consumption by budget $\mathcal{B}$:

\begin{align*}
\text{Maximize:} \quad & \sum_{p \in P} V(p) R_p \\
\text{Subject to:} \quad & \sum_{\langle c_1, c_2\rangle \in P}  H_{c_2} \cdot R_{\langle c_1, c_2\rangle} + \sum_{c \in C} H_c \cdot L_c \le \mathcal{B}
\end{align*}

To avoid double-counting pairs of join columns that are symmetric (e.g., \verb|A:B| versus \verb|B:A|), we add one more constraint, avoiding redundant join conditions: $R_{\langle c_1,c_2 \rangle} + R_{\langle c_2,c_1 \rangle} < 2$. Table~\ref{tab:constraints} summarizes all of the aforementioned constraints and variables.

\begin{table}
    \centering
    \caption{Workload Compressor, ILP Constraints}
    \medskip
    \small
    \begin{tabular}{p{0.4\linewidth} p{0.55\linewidth}}
    \toprule[1pt]
    \textbf{Variable} & \textbf{Semantics} \\
    \midrule[1pt]
    $R_{\langle c_1, c_2\rangle} \in \{0, 1\}$ & Binary variable which denotes if $c_2$ appears on the right-hand side of $c_1$.\\
    $L_c \in \{0, 1\}$ & Binary variable which denotes whether the left-hand side column $c$ is included in the prompt.\\
    \toprule[1pt]
    \textbf{Constraint} & \textbf{Semantics} \\
    \midrule[1pt]
    $R_{\langle c_1,c_2\rangle} \le L_{c_1}$ & A right-hand side column $c_2$ can be included only if its left-hand side column is included.\\
    $L_{c_1} \le \sum_{\langle c_1,c_2\rangle \in P} R_{\langle c_1,c_2\rangle}$ & A left-hand side column can be included only if at least one right-hand side column is included.\\
    $R_{\langle c_1,c_2 \rangle} + R_{\langle c_2,c_1 \rangle} < 2$ & Symmetric join snippets cannot be added together.\\
    \bottomrule[1pt]
    \end{tabular}
    \label{tab:constraints}
\end{table}
\section{Configuration Selector}
\label{sec:config_select}

The LLM may generate configurations of varying quality. Evaluating those configurations sequentially may lead to large overheads due to particularly slow configurations. Next, we describe how \system avoids this issue.

\textbf{Incremental Timeouts.} Algorithm~\ref{algo:config_selector} describes \system's configuration selection approach. Table~\ref{tab:fields} summarizes the fields of the $ConfigMeta$ objects that are used in several of the following algorithms (during initialization, values for those fields are provided in the order in which they appear in Table~\ref{tab:fields}). To avoid spending too much time evaluating bad configurations, Algorithm~\ref{algo:config_selector} proceeds in rounds and imposes a per-configuration timeout in each round. Initially, it is unclear what timeout allows the best configuration to finish. Hence, Algorithm~\ref{algo:config_selector} increases an initial timeout $t$ gradually by multiplying the timeout with factor $\alpha$ in each round. Having a geometric progression for the timeout is crucial, as it guarantees that the total time spent in all previous rounds (which may be wasted if query evaluation is interrupted due to timeouts) is always proportional to the time spent in the last round (in which at least one configuration finishes executing all queries before the timeout). 

\textbf{Avoiding Redundancy.} Evaluating the same queries with the same configuration is redundant (unless query evaluation is interrupted by timeouts). Hence, \system keeps track of the queries that were fully processed for each configuration. Completed queries are stored for each configuration in the $completedQueries$ field of the $ConfigMeta$ object associated with that configuration. Meta-data about configurations is generally stored in the $configsMeta$ dictionary, mapping configurations to meta-data and initialized in Lines~3 to 5. When selecting queries for evaluation for a specific configuration, \system removes queries already processed (Line~20).

\textbf{Best Configuration.} Algorithm~\ref{algo:config_selector} keeps track of the best currently known configuration in the $best$ variable. This variable is of type $BestConfig$ featuring two fields: the $time$, indicating the execution time of the best currently known configuration, and $config$, describing the best configuration itself. As long as no candidate configurations have been fully evaluated (i.e., all queries have been fully processed with those configurations), the fields of $best$ are set to default values, initialized in Line~2. The while loop from Lines~6 to 13 terminates once at least one configuration has been fully evaluated. It may seem like the first configuration to finish should be the optimal one. That, however, is not generally the case, as illustrated by the following example.

\begin{example}
Consider a workload $W =\{q_1, q_2, q_3\}$, an initial timeout $t = 1$, and two configurations $c_1$ and $c_2$, where the times for queries $q_1$, $q_2$, and $q_3$ are 1, 2, and 4 in $c_1$ and 5, $\epsilon$, $\epsilon$ in $c_2$. In the first three rounds (with timeouts of 1, 2, and 4 seconds respectively), $c_1$ will execute all of the queries, taking 7 seconds in total. However, $c_2$ can achieve a better time of 5 + $2\cdot\epsilon$ seconds, despite the fact that it did not complete any query execution earlier. This means that even if $c_1$ finishes first, configuration $c_2$ is the better configuration overall.
\end{example}

Therefore, \system identifies the optimal configuration as follows. Once the first configuration terminates, Algorithm~\ref{algo:config_selector} uses a different timeout that may, in fact, be higher than the timeout at the start of the corresponding round. \system uses as timeout the execution time of the best currently known configuration (which may be updated repeatedly as other configurations finish) \emph{minus} the time spent fully evaluating queries by the corresponding configuration (Line~19). Any configuration exceeding that timeout is guaranteed to be sub-optimal. The aforementioned timeout can only be set once the total execution time of a first configuration is known. Before that happens, the default timeout of the corresponding round applies. Therefore, once the first configuration finishes, all of the other configurations must be given the chance to finish with the new timeout. This happens in the loop from Lines~14 to 15. After that, the algorithm returns the best configuration. Note that finding good configurations earlier is preferable since it enables tighter timeouts, thereby reducing time wasted on sub-optimal configurations. Therefore, \system iterates over configurations in decreasing order of throughput (i.e., number of queries finished per time unit), assuming that the configuration with the currently highest throughput is more likely to be optimal (which is, of course, not guaranteed).

\begin{example}
    Figure~\ref{fig:config_eval} depicts an example of our configuration selection approach. The x-axis represents execution time, and the y-axis represents the configuration ID. We assume that the timeout increases by factor $\alpha=2$ from one iteration to the next, starting with a timeout of $t=4$. Each configuration is represented with a different color, and a colored square represents a completed query of the configuration of the same row. Gray squares indicate that the last executed query was interrupted due to a timeout. In Round~1, Configuration \circled{1}{pink} completes three queries, Configuration~\circled{2}{yellow} completes one query, and gets interrupted while executing $Q_2$. Configurations \circled{3}{green} and \circled{4}{red} execute two queries and get interrupted while executing $Q_3$. In round two, the timeout doubles to $2\cdot t=8$, meaning that the second round stops after 12 time units total (taking into account the first round as well). Finally, in Round~3, Configuration~\circled{1}{pink} completes all 10 queries after 14 time units total, implying now a configuration-specific timeout for each of the other configurations. None of the remaining configurations terminate within the new timeout. Thus, \system returns Configuration~\circled{1}{pink} as the optimum.
\end{example}
\begin{figure}
        \centering
        \includegraphics[scale=0.4]{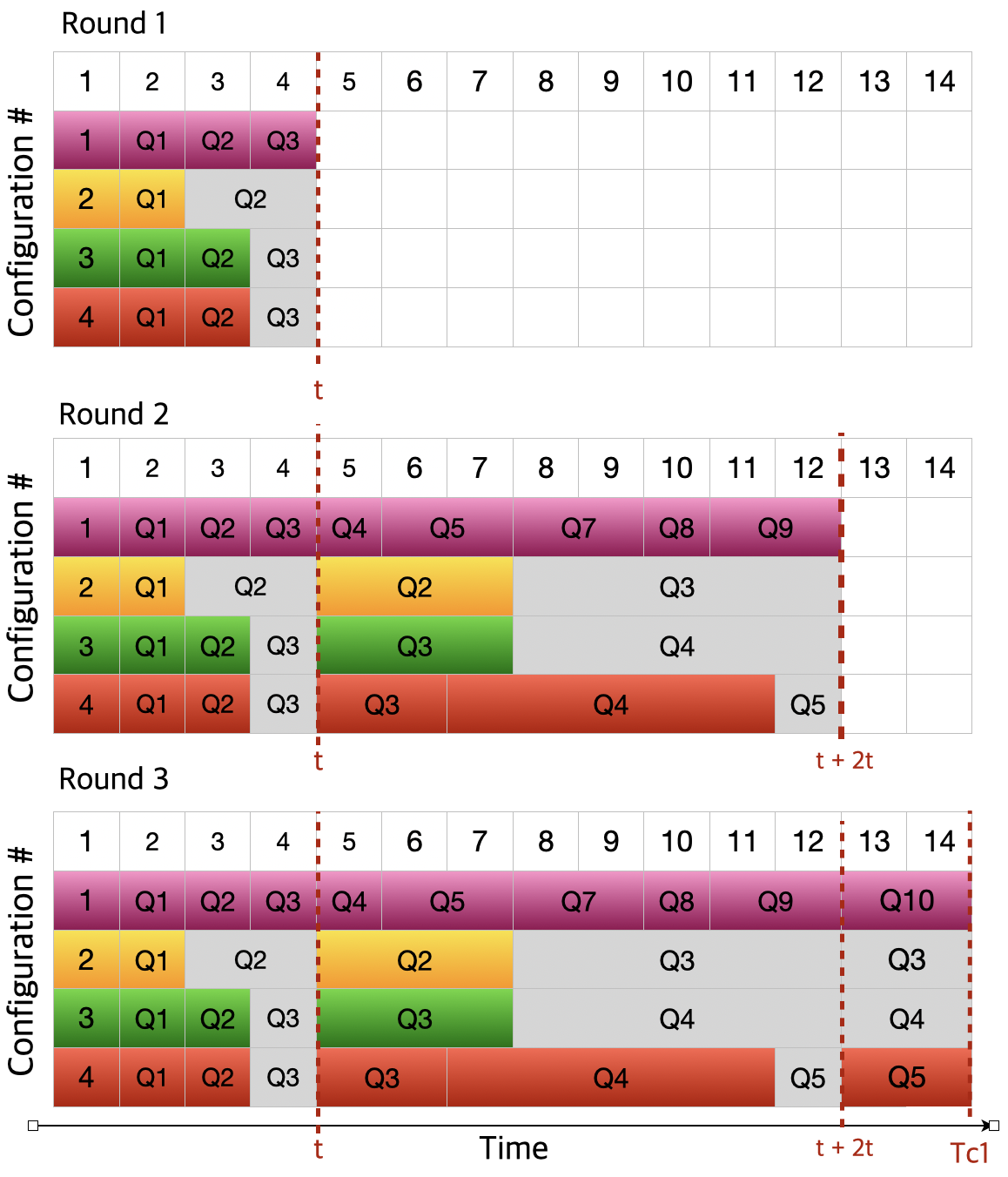}
        \caption{\system Configuration Evaluation }
        \label{fig:config_eval}
\end{figure}

\begin{algorithm}[!ht]
\DontPrintSemicolon
  \SetKwFunction{FConfSel}{ConfigSelect}
  \SetKwProg{Fn}{Function}{:}{}
  \Fn{\FConfSel{$W$, $C$, $t$, $\alpha$}}{
        \tcc{W: The input workload}
        \tcc{C: The input configuration sets}
        \tcc{t: The initial timeout}
        \tcc{$\alpha$: The timeout ratio}
        \vspace{0.25cm}
        \tcc{Initialize best configuration}
        $best \gets BestConfig(\infty,\mathbf{null})$\;
        \tcc{Initialize configuration meta-data}
        $configsMeta \gets dict()$\;        
        \For{$c \in C$} {
            $configsMeta[c]\gets ConfigMeta(0,False,0,\emptyset)$\;
        }
        \tcc{Until first configuration finishes}
        \While{$best.time == \infty$}
        {
            \For{$c \in C$ in decreasing order of throughput}
            {   
                \tcc{Evaluate next queries for this configuration}
                $Update(c, W, configsMeta, t, best)$\;
                \If{$configsMeta[c].isComplete$}
                {
                    $candidates\gets C-\{c\}$\;
                    \textbf{break}\;
                }
                \tcc{Consider re-configuration overheads}
                $t \gets \max_{c\in C}(t, configsMeta[c].indexTime)$ \label{algo:config_selector:adaptive_timeout}
            }
            $t \gets \alpha\cdot t$\; \label{algo:config_selector:alpha}
        }
        \tcc{Check whether remaining configurations are better}
        \For{$c \in candidates$ in decreasing order of throughput}
        {
            $Update(c, W, configsMeta, t, best)$\; \label{algo:config_selector:last_round}
        }
        \KwRet best.config\;
  }
  \vspace{0.25cm}
  \SetKwFunction{FConfSel}{Update}
  \SetKwProg{Fn}{Procedure}{:}{}
  \Fn{\FConfSel{$c$, $W$, $configsMeta$, $t$, $best$}}{
        \tcc{Updates all relevant data structures after query evaluations}
        \If {$best.time \neq \inf$}{
            $t \gets best.time - configsMeta[c].time$\; \label{algo:config_selector:timeout-set}
        }
        \tcc{Execute only the non-executed queries}
        $Q \gets W - configsMeta[c].completedQueries$\;
        $Evaluate(c, Q, t, configsMeta)$
        \vspace{0.25cm}

        \If{configsMeta[c].isComplete}
        {
            \tcc{Keep the best configuration}
            \If{$configsMeta[c].time < best.time$}
            {
                $best.time\gets configsMeta[c].time$\; \label{algo:config_seledctor:best_config}
                $best.config\gets c$\;
            }
        }
  }
\caption{Configuration Selection}
\label{algo:config_selector}
\end{algorithm}

\noindent\textbf{Time Guarantees.} So far, we have justified the choice of timeouts intuitively. Now, we provide a formal proof, showing that our timeout scheme bounds total tuning time for query evaluation as a function of the optimal configuration returned by the LLM.

\begin{theorem}
    \label{theorem:selector}
    The total tuning time (excluding reconfiguration overheads) is in $O(k\cdot \alpha\cdot C_{best})$, where $C_{best}$ is the execution time of the best configuration returned by the LLM, for $\alpha \ge 2$.
\end{theorem}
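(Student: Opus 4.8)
Write $k=|C|$ for the number of configurations returned by the LLM. Excluding the reconfiguration-driven timeout adjustment in Line~\ref{algo:config_selector:adaptive_timeout} (whose cost is, by assumption, outside this bound), the per-configuration timeout used in round $i$ of the \texttt{while} loop is $t_i:=t\,\alpha^{i-1}$. The plan is to bound the query-evaluation time of the two phases of Algorithm~\ref{algo:config_selector} separately --- the \texttt{while} loop and the subsequent loop over $candidates$ --- using two elementary facts about a geometric schedule with $\alpha\ge 2$: (i) a partial sum is dominated by its last term, $\sum_{i=1}^{r}t_i=t\frac{\alpha^{r}-1}{\alpha-1}\le\frac{\alpha}{\alpha-1}\,t_r\le 2\,t_r$; and (ii) within a single round each of the $k$ configurations runs for at most that round's timeout, so round $i$ costs at most $k\,t_i$ of query-evaluation time.

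The key intermediate step is to bound the round $r$ in which the \texttt{while} loop terminates, i.e., the first round in which some configuration finishes all of its queries. I would show $t_r<\alpha\,C_{best}$ as follows. Assume $r\ge 2$ and, for contradiction, that $t_{r-1}\ge C_{best}$. In round $r-1$ the configurations are processed in decreasing order of throughput; if none of the configurations processed before the optimal one finishes, then the optimal configuration itself is reached, and since its timeout $t_{r-1}\ge C_{best}$ is at least the total time of whatever queries of it remain outstanding at the start of that round, it completes. Either way some configuration finishes in round $r-1$, so the loop terminates no later than round $r-1$, contradicting the choice of $r$. Hence $t_{r-1}<C_{best}$ and $t_r=\alpha\,t_{r-1}<\alpha\,C_{best}$; the case $r=1$ is handled by the natural assumption that the initial timeout satisfies $t\le C_{best}$. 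Combining with (i) and (ii), the total query-evaluation time of the \texttt{while} loop is at most $k\sum_{i=1}^{r}t_i\le 2k\,t_r<2k\alpha\,C_{best}$.

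For the final loop I would first note that, writing $c_f$ for the first configuration to finish, the value $best.time$ at the moment it is set equals the total runtime of $c_f$ on $\mathcal{W}$, which is at most the wall-clock time actually spent evaluating $c_f$ over rounds $1,\dots,r$, hence at most $\sum_{i=1}^{r}t_i\le 2\,t_r<2\alpha\,C_{best}$. In the final loop (Line~\ref{algo:config_selector:last_round}) each of the at most $k-1$ remaining configurations is evaluated with timeout $best.time-configsMeta[c].time\le best.time$ (Line~\ref{algo:config_selector:timeout-set}), and $best.time$ can only decrease as that loop proceeds, so this phase costs at most $(k-1)\,best.time<2k\alpha\,C_{best}$. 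Adding the two phases gives total query-evaluation time below $4k\alpha\,C_{best}\in O(k\cdot\alpha\cdot C_{best})$, as claimed.

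\textbf{Main obstacle.} The geometric-series bookkeeping is routine; the real work is the middle step --- arguing that the loop cannot survive past the round whose timeout first reaches $C_{best}$. Care is needed because three features of Algorithm~\ref{algo:config_selector} could a priori interact: completed queries are carried over between rounds (so ``a configuration does not finish in round $i$'' is a statement about its outstanding work, not the whole workload), the loop breaks on the \emph{first} configuration to finish, which, as the preceding example shows, need not be the optimal one, and configurations are visited in throughput order rather than uniformly; one must check that none of these can postpone termination. One must also keep distinct the wall-clock time spent on a configuration (which includes partial work on queries later aborted by a timeout) and the recorded runtime $best.time$ (only fully completed queries), since the final-phase timeouts depend on the latter while the cost accounting depends on the former. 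Finally, the adaptive timeout increase of Line~\ref{algo:config_selector:adaptive_timeout} is deliberately outside the scope of this statement, which therefore bounds query-evaluation time only.
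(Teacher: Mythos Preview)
Your proposal is correct and follows essentially the same approach as the paper: bound the last-round timeout by $\alpha\cdot C_{best}$ via the observation that the optimal configuration would have completed once the timeout reaches $C_{best}$, then use the geometric series with ratio $\alpha\ge 2$ to control all earlier rounds. Your treatment is in fact more careful than the paper's own proof, which compresses the argument into a few lines and does not separately address the final \texttt{candidates} loop or the distinction between wall-clock evaluation time and the recorded $best.time$; your explicit handling of these points (and of the $r=1$ boundary case) fills gaps the paper leaves implicit.
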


\begin{proof}
The timeout increases by factor $\alpha$ from one round to the next. Hence, the timeout of the final round, $T_{last}$, is higher than $C_{best}$ at most by that factor: $T_{last}\leq\alpha\cdot C_{best}$. The execution time of each configuration in the last round is upper-bounded by $T_{last}$. Hence, the total execution time of the final round for $k$ configurations is upper-bounded by $k\cdot T_{last}\leq k\cdot \alpha\cdot C_{best}$. However, as the series of timeouts forms a geometric progression with factor $\alpha\geq 2$, the accumulated time of all prior rounds is at most equal to the time of the last round.
\end{proof}

\begin{table}[t]
\caption{Field Names and Descriptions of the $ConfigMeta$ Object Used in Algorithms \ref{algo:config_selector} and \ref{algo:config_evaluator}}.
\begin{tabular}{ll}
\toprule[1pt]
\textbf{Field Name} & \textbf{Description} \\
\midrule[1pt]
$time$ & Completed query time \\
$isComplete$ & Configuration completion flag \\
$indexTime$ & Index creation time \\
$completedQueries$ & Set of completed queries \\
\bottomrule[1pt]
\end{tabular}
\label{tab:fields}
\end{table}
\noindent\textbf{Reconfiguration Overheads.} We have bounded time overheads due to executing queries with different configurations. However, tuning time also depends on the time required for configuration changes. In practice, overheads for index creations tend to dominate reconfiguration overheads. Index creation overheads have the potential to dominate tuning time. For instance, starting tuning with timeouts that are fairly small, compared to index creation overheads, would lead to inefficient tuning, spending more time creating indexes than executing queries. To avoid such cases, \system takes into account index generation overheads when setting timeouts. More precisely, \system measures index generation overheads and adapts timeouts accordingly (see Line~\ref{algo:config_selector:adaptive_timeout} of Algorithm~\ref{algo:config_selector}).
\section{Configuration Evaluator}
\label{sec:config_eval}
We describe how \system evaluates configurations efficiently while minimizing re-configuration overheads.

\subsection{Evaluating Configurations}

Algorithm~\ref{algo:config_evaluator} shows pseudo-code for the \texttt{Evaluate} procedure, used in Algorithm~\ref{algo:config_selector}. As input, this code takes a configuration $c$ to evaluate, a set $Q$ of queries that have not been fully processed using this configuration, a timeout $t$, and a dictionary mapping configurations (including $c$) to associated meta-data (featuring the fields introduced in Table~\ref{tab:fields} in the previous section).

The effect of executing the procedure is that the meta-data for configuration $c$, stored in $configsMeta[c]$, gets updated with the results of the evaluation. We assume a call-by-reference model such that changes to the aforementioned data structure will be visible to the calling function after executing \texttt{Evaluate}.

Algorithm~\ref{algo:config_evaluator} keeps track of the time remaining for the next queries which is initialized to the input timeout in Line~2 and is updated each time that a query is processed (Line~16). The remaining time is provided as an input parameter to the \texttt{Execute} function, executing single queries (first parameter) with a timeout (second parameter). The \texttt{Evaluate} procedure automatically creates indexes that are required by the current configuration. However, it is inefficient to create all indexes associated with the configuration immediately. Due to the timeout, query evaluation may end long before the last query has finished. This means that creating indexes that are only relevant to later queries is wasteful if those queries are not executed. Note that all indexes created in this procedure are implicitly dropped once the procedure terminates (to enable us to evaluate other configurations without spurious indexes). This means that indexes that are not used would need to be re-created in the next invocation.

Therefore, Algorithm~\ref{algo:config_evaluator} creates indexes \emph{lazily}, creating only the indexes that are potentially relevant for the next query at hand. To determine whether indexes may be relevant, \system analyzes the column and table references in those queries, creating a map $I$ that maps queries to the set of potentially relevant indexes (this map is created in Line~6). To assess whether an index could be useful for a query, we check whether the indexed columns overlap with columns that appear in query predicates. Before evaluating each query, \system creates the indexes possibly needed for this specific query (Line~9). At the same time, \system keeps track of index generation overheads (which are used in Algorithm~\ref{algo:config_selector} to update timeouts). Function~\texttt{createIndexes} returns the time needed to create the corresponding indexes while, at the same time, creating them. Not all relevant indexes must be created as some of them may have been relevant for prior queries, too. Algorithm~\ref{algo:config_evaluator} keeps track of all existing indexes in the $createdIndexes$ variable and subtracts existing indexes from the set of indexes to create (Line~9).

Procedure~\texttt{Evaluate} terminates once the first query is interrupted due to a timeout. In that case, the flag named $complete$, included in the result returned by \texttt{Execute} (this function executes the input query), is set to false. Algorithm~\ref{algo:config_evaluator} checks for interruptions in Line~12. If queries are interrupted, it sets the $isComplete$ flag of the configuration meta-data to false. If no query is interrupted, this field keeps the value of True it is initialized with (Line~4).

Note that Algorithm~\ref{algo:config_evaluator} processes queries in a carefully chosen order, implemented by the \texttt{FindOptimalOrder} function. The following subsections discuss the benefit of that function and its implementation.

\begin{algorithm}[!ht]
\DontPrintSemicolon
  \SetKwFunction{FConfSel}{Evaluate}
  \SetKwProg{Fn}{Procedure}{:}{}
  \Fn{\FConfSel{$c$, $Q$, $t$, $configsMeta$}}{
        \tcc{c: The input configuration}
        \tcc{Q: The input queries}
        \tcc{t: The timeout}
        \tcc{configsMeta: meta-data on configurations}
        $remainingTime \gets t$\;
        $createdIndexes \gets \emptyset$\;
        $configsMeta[c].isComplete\gets True$\;
        $configsMeta[c].indexTime\gets 0$\;
        \vspace{0.25cm}
        \tcc{Creates a query-to-indexes map}
        $I \gets queryIndexMap(Q, c)$\;
        \tcc{Compute the optimal order using Algorithm~\ref{algo:dp-ordering}}
        $Qsorted \gets FindOptimalOrder(Q, I)$\;
        \vspace{0.25cm}
        \tcc{Iterate over queries in optimal order}
        \For{$q \in Qsorted$} {
            \tcc{Create required indexes}
            $configsMeta[c].indexTime += createIndexes(I[q] - createdIndexes)$\; \label{algo:config_evaluator:query-exec:create-indexes}
            $createdIndexes \gets createdIndexes \cup I[q]$\;
            \tcc{Execute next query}
            $queryResult \gets Execute(q, remainingTime)$\; \label{algo:config_evaluator:query-exec}

            \If {\textbf{not} $queryResult.complete$} { \label{algo:config_evaluator:query-exec:timeout}
                $configsMeta[c].isComplete\gets False$\;
                \textbf{break;}\;
            }
            \Else { \label{algo:config_evaluator:query-exec:success}
                $remainingTime -= queryResult.executionTime$\;
                $configsMeta[c].time += queryResult.executionTime$\;
                $configsMeta[c].completedQueries.add(q)$
            }
        }
  }
  \SetKwFunction{FConfSel}{Execute}
        \SetKwProg{Fn}{Function}{:}{}
        \Fn{\FConfSel{$q$, $timeout$}}{
            \tcc{Executes the queries in $q$ with the given timeout.}\;
            \tcc{Returns a $Metrics$ instance which contains the fields $time$ and $completed$, indicating the execution time and the completion of all queries in $q$.}
        }
\caption{Configuration Evaluation}
\label{algo:config_evaluator}
\end{algorithm}

\subsection{Cost Model for Query Scheduling}
\label{sec:config_eval:scheduler}

The order in which queries are processed can have a significant impact on index creation overheads. If all queries were to be executed, index generation overheads would be independent of query order. However, taking into account possible interruptions, more likely to affect later than earlier queries, the order can impact cost. This is illustrated in the following example.

\begin{example}
Assume we process two queries, $q_1$ and $q_2$. Each query can use only one query-specific index whose creation cost is 1 for $q_1$ and 5 for $q_2$. Assume that both queries have the same run time and that the execution of the first query, whichever one it is, gets interrupted with a probability of 50\%. Using the order $q_1-q_2$, the expected cost for index creations is $1+0.5\cdot 5=3.5$, i.e., we definitely must create the index for the first query but incur the cost for the second query index only with a probability of 50\%. Otherwise, if using order $q_2-q_1$, the expected cost is $5+0.5\cdot 1=5.5$. Here, we definitely pay for creating the expensive index of the second query, whereas the index creation overheads of the first query are only paid in 50\% of cases.
\end{example}

As shown in the example, query order matters for performance. \system uses cost-based optimization to order queries. In this subsection, we discuss the cost model it uses. We generalize from the example above. We calculate expected costs for index creations, taking into account different possibilities for the point at which query execution is interrupted due to the timeout. We simplify by assuming that an interruption after each query is equally likely. While highly simplifying, the orders resulting from this model tend to improve performance over default orders, as shown in the experiments. 

Given $n$ queries $q_1$ to $q_n$ to evaluate, we assume that an interruption after each of them (or the case that all of them start execution) is equally likely, i.e., have a probability of $1/n$. Denote by $z_i(Q)$ the index creation overheads for the query $i$, assuming that the set of queries $Q$ has been evaluated before (this is important since the set of indexes to create for the new query depends on which indexes were created for prior queries). Given query order $i_j$ for $1\leq j\leq n$, the cost for interrupting after the $k$-th query is given by $\sum_{1\leq j\leq k}z_{i_j}(\{q_{i_1},\ldots,q_{i_{j-1}}\})$. Hence, the total expected execution cost is given by 
\begin{equation}
1/n\cdot \sum_{1\leq k\leq n}\sum_{1\leq j\leq k}z_{i_j}(\{q_{i_1},\ldots,q_{i_{j-1}}\})\label{eq:expectedCost}
\end{equation}

\subsection{Optimizing Query Order}
To address the problem of optimally ordering query execution and index creation for each configuration evaluation round, we implemented an ordering algorithm based on dynamic programming. Our algorithm is inspired by a classical algorithm for the join ordering problem~\cite{selinger1979access}. Algorithm~\ref{algo:dp-ordering} describes our solution in detail. As input, it takes a set of queries to optimally order, as well as the map $I$ from queries to the set of potentially interesting indexes. The output of Algorithm~\ref{algo:dp-ordering} is the optimal order.

The algorithm is similar to Sellinger's famous dynamic programming algorithm for join ordering~\cite{selinger1979access}. Similarly to the left-deep or right-deep join ordering problem, given $n$ queries, there are $n!$ possible permutations. However, if the first $k$ queries are sorted, the cost of adding the next query (out of the $n-k$ remaining ones) is independent of the order of the first $k$ ones. More formally, we observe that the principle of optimality holds for our cost function. The principle of optimality states that replacing a solution to a sub-problem with a better solution cannot worsen the overall quality. Specifically, in our case, it means that reducing the expected cost of the first $k$ queries by reordering them, cannot worsen the expected cost of all queries.

\begin{theorem}
The principle of optimality holds for expected index creation costs.\label{th:poo}
\end{theorem}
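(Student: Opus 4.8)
The plan is to rewrite the expected cost of Equation~\eqref{eq:expectedCost} in a single-sum form that makes a clean separation into a ``prefix'' part and a ``suffix'' part, and then to observe that the suffix part is invariant under any reordering of the prefix.

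First I would swap the order of the two summations in~\eqref{eq:expectedCost}. Since the term $z_{i_j}(\{q_{i_1},\ldots,q_{i_{j-1}}\})$ is counted once for every $k$ with $j\le k\le n$, the expected cost of an order $(i_1,\ldots,i_n)$ equals $\frac{1}{n}\sum_{1\le j\le n}(n-j+1)\,z_{i_j}(\{q_{i_1},\ldots,q_{i_{j-1}}\})$. This rewriting makes explicit the two facts the whole argument rests on: the $j$-th summand depends only on the query placed in position $j$ and on the \emph{set} $\{q_{i_1},\ldots,q_{i_{j-1}}\}$ of queries placed before it (which is exactly the assumption in the cost model, namely that $z_i(Q)$ depends on $Q$ only as a set), and the multiplicative weight $(n-j+1)$ depends only on the position index $j$, not on which query sits there nor on how the earlier queries are ordered.

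Next I would fix an arbitrary split point $k$ and an arbitrary order $(i_1,\ldots,i_n)$, and split the single sum into the terms with $j\le k$ and the terms with $j>k$. Write $S=\{q_{i_1},\ldots,q_{i_k}\}$. The terms with $j\le k$ depend only on the order chosen for the first $k$ queries; I will call their (weighted) sum the prefix cost of that order of $S$. For $j>k$ we have $\{q_{i_1},\ldots,q_{i_{j-1}}\}=S\cup\{q_{i_{k+1}},\ldots,q_{i_{j-1}}\}$, so each such summand --- and therefore the entire suffix cost, the sum of the terms with $j>k$ --- depends only on the set $S$, on the positions $k+1,\ldots,n$, and on the order assigned to those positions, but \emph{not} on how $S$ itself is internally ordered. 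Hence, for any two orders that place the same set $S$ in the first $k$ positions and agree on positions $k+1,\ldots,n$, the total costs differ exactly by the difference of their prefix costs.

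The principle of optimality follows immediately: replacing the order of the first $k$ queries by one with smaller prefix cost, while keeping the same set $S$ and the same continuation on positions $k+1,\ldots,n$, leaves the suffix cost unchanged and so does not increase the total cost. This is precisely the substitution property that validates the Selinger-style dynamic program of Algorithm~\ref{algo:dp-ordering}, where the recurrence for a subset $S$ is well-posed exactly because the weight of appending the $|S|$-th query, $(n-|S|+1)$, depends only on $|S|$. I expect the only delicate point to be the bookkeeping of the summation swap and the accompanying claim that the suffix weights are prefix-order-independent; once the cost is in single-sum form, the prefix/suffix separation is routine. It is worth noting explicitly that the argument uses nothing about $z$ beyond its dependence on the prefix as a set --- monotonicity of $z$ in that set, although true, is not needed.
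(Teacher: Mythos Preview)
Your argument is correct, but it does not quite prove the statement the paper actually makes. The paper phrases the principle of optimality as: \emph{reducing the expected cost of the first $k$ queries} --- that is, Equation~\eqref{eq:expectedCost} applied to those $k$ queries in isolation, with weights $(k-j+1)/k$ --- \emph{cannot worsen the $n$-query expected cost}. What you establish instead is that reducing the \emph{$n$-weighted prefix cost} $\sum_{j\le k}(n-j+1)\,z_{i_j}(\cdot)$ cannot worsen the total. These are different functions of the prefix order, and your proof never connects them.

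The paper's own proof takes a different split. It keeps the double sum, cuts the \emph{outer} index at $k$, and notes that the first $k$ outer terms equal $k/n$ times the $k$-query expected cost, while the remaining outer terms are order-independent in the first $k$ queries. That order-independence is the step you should look at: each remaining outer term contains $\sum_{j\le k} z_{i_j}(\{q_{i_1},\ldots,q_{i_{j-1}}\})$, and this sum is independent of the prefix order only because of the telescoping identity $\sum_{j\le k} z_{i_j}(\cdot)=\mathrm{cost}\bigl(\bigcup_{j\le k} I(q_{i_j})\bigr)$, a structural property of lazy index creation that the paper uses without stating. Your route --- swap the sums and cut the \emph{inner} index at $k$ --- gives a cleaner suffix argument that needs only set-dependence of $z$, but it lands on the $n$-weighted prefix cost rather than the $k$-query cost. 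To recover the paper's statement you need one more line: since $(n-j+1)=(k-j+1)+(n-k)$, your prefix cost equals $k$ times the $k$-query cost plus $(n-k)\sum_{j\le k} z_{i_j}(\cdot)$, and that last sum is order-independent by precisely the telescoping property you said you did not need. So both routes ultimately lean on it; the paper hides it in the tail, you would need it to reconcile the two prefix weightings.
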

\begin{proof}[Sketch]
In Equation~\ref{eq:expectedCost}, the cost for the first $k$ queries, scaled by a constant, appears as the first $k$ terms in the outer sum. When calculating the expected cost for $k$ queries, we assume that an interruption after each query has a probability of $1/k$. On the other hand, when considering $n>k$ queries, that probability reduces to $1/n$. However, multiplying the first $k$ terms in the cost function for $n$ queries by factor $n/k$ yields exactly the cost of the first $k$ queries alone. As the cost of the first $k$ queries appears, scaled by a positive constant, as a term in the cost function for $n$ queries, the cost for $n$ queries is monotone in the cost of the first $k$ queries. This means that changing the order of the first $k$ queries to reduce the cost for $k$ queries cannot worsen the overall cost. For the remaining terms (after the $k$-th term in the outer sum), the order of the first $k$ queries does not matter anymore.
\end{proof}

This insight motivates Algorithm~\ref{algo:dp-ordering}. We quickly describe it in the following.

Algorithm~\ref{algo:dp-ordering} maintains the optimal cost for query subsets in Variable~$dpCost$. The associated optimal orders are stored in Variable~$dbOrder$. The algorithm initializes both data structures in Lines~4 to 7, using single-element query orders and associating singleton queries with the costs of creating the associated indexes. 

Next, the algorithm enumerates all subsets of queries in increasing order of set cardinality, starting with query pairs and ending with the set containing all queries. For each subset of queries, it evaluates all possibilities to order the queries when using locally optimal orders for query subsets. More precisely, the algorithm considers all possibilities to expand query orders with $k$ elements into query orders with one additional query. Given a subset of queries for which an optimal order should be calculated, it successively considers each query as a candidate to appear last in the corresponding order. Having chosen a query to appear last, Algorithm~\ref{algo:dp-ordering} retrieves the optimal order for the remaining queries (which must have been calculated in prior iterations since Algorithm~\ref{algo:dp-ordering} considers query sets in ascending order of cardinality) and adds the cost of creating the indexes for the last query. Note that the cost of creating indexes for that last query depends on the indexes that have been already generated for the queries that appear first in the order.

Algorithm~\ref{algo:dp-ordering} updates the best order and associated cost for each of the considered options. Whenever the combined cost of creating indexes for prior queries and creating indexes for the last query is below the best currently known cost for a given query subset, the new order and its cost are stored. Finally, the algorithm returns the best order for the entire query set.

\begin{theorem}
Algorithm~\ref{algo:dp-ordering} generates an optimal query order according to our cost model.
\end{theorem}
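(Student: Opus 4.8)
The plan is to argue by induction on the cardinality of the query subsets that Algorithm~\ref{algo:dp-ordering} computes, for every subset $S$ of the input queries, the minimum possible value of the cost of Equation~\ref{eq:expectedCost} over all orderings of $S$, together with a witnessing order stored in $dpOrder[S]$. Instantiating this claim at $S$ equal to the full query set then proves the theorem, since the algorithm returns $dpOrder$ of the complete set. The first step is to set up the invariant precisely: for a subset $S$ with $|S|=m$, let $\mathrm{opt}(S)$ be the minimum over all orderings $\sigma=(i_1,\dots,i_m)$ of $S$ of the quantity $\sum_{1\le k\le m}\sum_{1\le j\le k}z_{i_j}(\{q_{i_1},\dots,q_{i_{j-1}}\})$ (the un-normalized form of Equation~\ref{eq:expectedCost}; the leading factor $1/m$ merely rescales and does not change the minimizing order). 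I would also record the structural fact that the index-creation cost charged when a query $q$ is appended to a prefix depends only on the \emph{set} of queries already in that prefix and not on their order, because the indexes created so far are exactly the union of the interesting-index sets of the preceding queries (cf.\ the $createdIndexes$ bookkeeping in Algorithm~\ref{algo:config_evaluator}); hence $z_q$ is a well-defined function of $S\setminus\{q\}$.

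The base case $m=1$ is immediate from the initialization (Lines~4--7): a singleton $\{q\}$ admits only one order, whose cost equals the cost of creating $I[q]$, and that is exactly what the algorithm stores in $dpCost$ and $dpOrder$. For the inductive step, fix $S$ with $|S|=m>1$ and assume the invariant for all strictly smaller subsets, which are precisely the subsets processed in earlier iterations since the algorithm enumerates subsets in increasing order of cardinality. Every ordering of $S$ arises by choosing a last query $q\in S$ and prepending an ordering $\tau$ of $S\setminus\{q\}$. By Theorem~\ref{th:poo} (the principle of optimality for this cost model), the total cost of such an ordering is monotone in the stand-alone cost of $\tau$ as an ordering of the $(m-1)$-query subproblem, while the contribution of the last query $q$ is the set-dependent quantity that does not depend on the internal order of $\tau$. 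Therefore, for each fixed last query $q$, the best ordering of $S$ ending in $q$ uses the optimal ordering of $S\setminus\{q\}$, whose value and order are available by the induction hypothesis; taking the minimum over $q\in S$ yields $\mathrm{opt}(S)$ and an optimal order. This is exactly the computation performed by Algorithm~\ref{algo:dp-ordering} for the subset $S$: it tries each $q\in S$ as the last query, retrieves $dpCost[S\setminus\{q\}]$ and $dpOrder[S\setminus\{q\}]$, accounts for the index-creation cost incurred by placing $q$ last, and keeps the overall cheapest option. Thus the invariant is preserved.

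I expect the main obstacle to be the bookkeeping around the normalization constant --- the subproblem of size $k$ weights its terms by $1/k$, whereas the same terms, embedded as a prefix inside a size-$n$ ordering, are weighted by $1/n$ --- and making rigorous the claim that the total cost is monotone in the stand-alone cost of $\tau$. Both points, however, are precisely what Theorem~\ref{th:poo} already establishes: the first $k$ terms of the outer sum of Equation~\ref{eq:expectedCost} equal, up to the positive factor $n/k$, the cost of the length-$k$ prefix, and the remaining terms are independent of the order of those first $k$ queries. Consequently the remaining work is to cite Theorem~\ref{th:poo} carefully, to invoke the set-dependence observation above for the last-query cost, and to check that the enumeration order of Algorithm~\ref{algo:dp-ordering} (subsets by ascending cardinality) makes every required subproblem available before it is used.
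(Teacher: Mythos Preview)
Your proposal is correct and takes essentially the same approach as the paper: both reduce correctness to Theorem~\ref{th:poo} and argue that, because the principle of optimality holds, the dynamic program's enumeration of subsets in ascending cardinality with locally optimal prefixes must produce a globally optimal order. The paper gives this as a two-sentence sketch, whereas you spell out the induction explicitly and add the auxiliary observation that the last-query cost depends only on the \emph{set} of preceding queries, which is implicit in the paper but not stated.
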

\begin{proof}[Sketch]
This is a direct consequence of the principle-of-optimality property of our cost function. The algorithm considers all query permutations that use locally optimal solutions for query subsets. However, according to Theorem~\ref{th:poo}, those permutations must contain an optimal solution.
\end{proof}

Note that Algorithm~\ref{algo:dp-ordering} has exponential complexity in the number of input queries as it considers all query subsets. The next section discusses a method by which \system limits the resulting complexity. This ensures that scheduling does not become a significant time factor during the tuning process.

\begin{algorithm}[!ht]
\DontPrintSemicolon
  \SetKwFunction{FConfSel}{ComputeOrderDP}
  \SetKwProg{Fn}{Function}{:}{}
  \Fn{\FConfSel{$W$, $I$}}{
  \tcc{W: The set of queries\\I: A hashmap containing the indexes for each query}
        $dpCost \gets \{\}$\;
        $dpOrder \gets \{\}$\;
        \For{$q \in W$} {
            $indexes \gets I[q]$\;
            $dpOrder[\{q\}] \gets [q]$\;
            $dpTable[\{q\}] \gets cost(indexes)$
        }
        \For{$i=2$ .. $i \le n$}
        {
            \tcc{Enumerate all subsets of size $i$}
            \For{$subset \subseteq W:|subset| = i$} {
                $dpCost[subset] \gets \infty$\;
                \For{$query \in subset$}
                {   
                    $subset' \gets subset - query$\;
                    $queryCost \gets computeCost(query, subset')$\;\label{algo:cs:cost}
                    $c \gets dpTable[subset'] + queryCost$\;
                    \If{$c < dpCost[subset]$} {
                        $dpCost[subset] \gets c$\;
                        $dpOrder[subset] \gets dpOrder[subset'] \circ query$\;
                    }
                }
            }
        }

        \KwRet $dpOrder[set(W)]$\;
  }
\caption{DP Query Scheduling}
\label{algo:dp-ordering}
\end{algorithm}

\subsection{Query Clustering}

Due to the exponential complexity of our query scheduling algorithm, sorting queries for large workloads can be excessively time-consuming. To mitigate this, we reduce the input size to our algorithm by clustering queries according to their index dependencies. To do so, we first need a query vectorization method that complies with standard distance metrics, in order to perform the clustering. Each index is assigned a unique number $i$, and each query $q$ is represented as a binary vector where the $i^{th}$ element indicates whether $q$ references index $i$ (1 if it does, 0 if it does not). We then cluster the queries based on these vector representations, using the Euclidean distance as the clustering distance metric. Given two queries $q$ and $q'$, and their index vectors $I$ and $I'$, then the distance for those two queries can be defined  as $d(I, I') = \sqrt{\sum_{i}^{n}(I_i - I'_i)^{2}}$. Using this metric, we can then proceed with clustering the queries using K-Means. This solution is minimally invasive to our algorithm. For instance, consider two queries, $q_1: A$ and $q_2: A$, both requiring index $A$. As long as the algorithm considers only the index creation cost, these queries can be grouped into a single cluster labeled with index $A$. This approach is also applied to clusters of queries that depend on multiple indexes. We strictly limit the input to our algorithm to a manageable size of 13 queries. 
\section{Experimental Evaluation}
\label{sec:experiments}

We compare \system to several baselines on multiple benchmarks and perform an ablation study.

\subsection{Experimental Setup}

All the experiments were executed on an EC2 p3.2xlarge instance, using the Deep Learning Base GPU AMI on Ubuntu 20.04. As benchmarks, we use the TPC-H benchmark with scaling factors one and ten, TPC-DS with scaling factor one, and the Join Order Benchmark (JOB). We tune Postgres 12.0 and MySQL 8.0. As initial configuration, we use the default settings for all system parameters. Unless noted otherwise, no indexes are initially created.

We compare \system to two other tuning systems exploiting LLMs, GPTuner~\cite{lao2023gptuner} and DB-BERT~\cite{trummer2022db}. Also, we compare to UDO~\cite{wang2021udo}, a tuning tool for universal database optimization exploiting reinforcement learning. Furthermore, we compare to LlamaTune~\cite{Kanellis2022}, integrated into the MLOS framework~\cite{Kroth2024}, a system that leverages techniques for dimensionality reduction to improve sample efficiency in automated database tuning. For the problem of index selection, we use two specialized tools as baselines, namely Dexter~\cite{dexter} and the DB2 Index Advisor~\cite{Valentin2000a}. Finally, we compare to ParamTree~\cite{Yang2023}, a system that tunes five parameters used by the PostgreSQL query optimizer: cpu\_tuple\_cost, cpu\_operator\_cost, cpu\_index\_tuple\_cost, seq\_page\_cost, and random\_page\_cost. While ParamTree can optimize settings for those constants on a per-operator level, the PostgreSQL optimizer uses a single value for those parameters for all operators. Hence, for each of the five parameters, we use the average of the operator-specific recommendations. For \system, we set the timeout for the first round to ten seconds and we set $\alpha=10$ as well. For all other baselines allowing to set timeouts (namely UDO and GPTuner), to limit their overheads due to bad configurations, we set this timeout to three times the time of the worst configuration found by \system. UDO executes workload samples to evaluate configurations. Hence, specifically for UDO, we re-execute configurations tried by UDO to measure the execution time for the full workload, thereby making the results comparable to the other baselines. \system uses OpenAI's GPT-4 model to generate configurations.

\subsection{Comparison to Baselines}

\begin{table*}[h!]
\small
\caption{Cost of Best Configuration Found by Each Approach, Scaled to the Cost of the Best Overall Configuration\label{tab:aggregates}}
\begin{tabular}{|cc|c|c|c|c|c|c|c|}
\hline
\multicolumn{1}{|c|}{Benchmark}      & DBMS & Initial Indexes & \system           & UDO                  & DB-Bert       & GPTuner              & LlamaTune                 & ParamTree            \\ \hline
\multicolumn{1}{|c|}{TPC-H 1GB}  & PG & Yes    & 1.07                 & 1.96                 & 1.13          & 1                    & 2.08                 & 3.23                 \\ \hline
\multicolumn{1}{|c|}{TPC-H 1GB}  & MS & Yes    & 1.06                 & 1                    & 1.02          & 1.73                 & 1.39                 & 3.24                 \\ \hline
\multicolumn{1}{|c|}{TPC-H 10GB} & PG & Yes    & 1.03                 & 1                    & 1.05          & 1.04                 & 2.38                 & 3.18                 \\ \hline
\multicolumn{1}{|c|}{TPC-H 10GB} & MS & Yes    & 4.98                 & 1                    & 5.16          & 5.84                 & 2.86                 & 15.2                 \\ \hline
\multicolumn{1}{|c|}{JOB}        & PG & Yes    & 1                    & 1.32                 & 1.05          & 1.1                  & 3.48                 & 3.48                 \\ \hline
\multicolumn{1}{|c|}{JOB}        & MS & Yes    & 1                    & 1.07                 & 3.69          & 3.69                 & 3.22                 & 3.22                 \\ \hline
\multicolumn{1}{|c|}{TPC-H 1GB}  & PG & No     & 1.05                 & 3.76                 & 1             & 1.06                 & 1.43                 & 4.24                 \\ \hline
\multicolumn{1}{|c|}{TPC-H 1GB}  & MS & No     & 1.2                  & 2.83                 & 1.02          & 1                    & 1.61                 & 3.64                 \\ \hline
\multicolumn{1}{|c|}{TPC-H 10GB} & PG & No     & 1.65                 & 1.54                 & 2.45          & 2.52                 & 1                    & 1.54                 \\ \hline
\multicolumn{1}{|c|}{TPC-H 10GB} & MS & No     & 1.04                 & 3.2                  & 1.09          & 1                    & 1.88                 & 3.2                  \\ \hline
\multicolumn{1}{|c|}{JOB}        & PG & No     & 1                    & 1.69                 & 1.08          & 1.13                 & 3.09                 & 3.26                 \\ \hline
\multicolumn{1}{|c|}{JOB}        & MS & No     & 1                    & 3.07                 & 3.07          & 3.07                 & 3.07                 & 3.07                 \\ \hline
\multicolumn{1}{|c|}{TPC-DS}     & PG & No     & 1                    & 1.37                 & 1.67          & 1.66                 & 3.33                 & 3.33                 \\ \hline
\multicolumn{1}{|c|}{TPC-DS}     & MS & No     & 1.79                 & 3.25                 & 1             & 1.03                 & 1.05                 & 3.25                 \\ \hline
\multicolumn{3}{|c|}{\textbf{Average}}                & \textbf{1.41} & \textbf{2.00} & \textbf{1.82} & \textbf{1.91} & \textbf{2.27} & \textbf{4.07} \\ \hline
\end{tabular}
\end{table*}

Figures \ref{fig:exp:original-indexes} and \ref{fig:exp:no-indexes} depict the results of our experimental evaluation. In our plots, the x-axis represents the optimization time (in seconds), while the y-axis represents the best execution time found (in seconds). For instance, the data point $(x, y)$ indicates the best execution time $y$ reported by each framework until time $x$. All experiments were run three times. For each line plot, the middle line represents the average of the best execution time found over different trials, and the shaded area indicates the error range, encompassing the minimum and maximum execution times found by each configuration. Each line starts at the point in time when the corresponding system had evaluated its first configuration. A dashed line is used in cases where the corresponding system did not evaluate any configurations successfully over the tuning time.

Figure~\ref{fig:exp:original-indexes} restricts the tuning scope to system parameter tuning. This means tuning approaches cannot change the physical design by creating indexes. Instead, all tuning methods use the same indexes, created before tuning starts and covering primary key and foreign key columns referred to in the input workload. All system parameters are initially set to the default values. Figure~\ref{fig:exp:no-indexes} expands the tuning scope, allowing baselines to change the physical design as well as settings for system parameters (starting without any indexes and with the default settings for all system parameters). Some of the baselines, namely UDO and \system, tune parameter settings as well as the physical design. The other baselines focus on system parameters tuning alone. For those systems, we create indexes recommended by Dexter~\cite{dexter} before tuning starts (we omit tuning results for those baselines without Dexter's indexes as those results are uniformly worse). Table~\ref{tab:aggregates} summarizes the results in the two aforementioned figures, reporting the scaled cost of the best configuration found by each baseline. For each scenario, we scale costs to the cost of the optimal configuration found for this scenario by any of the baselines. The ``Initial Indexes'' column indicates whether indexes are generated before tuning starts (restricting the tuning scope to parameters) or not.

As shown in Table~\ref{tab:aggregates}, \system is the most robust tuning method on average, followed by DB-BERT and GPTuner. Interestingly, this means that the three baselines exploiting tuning hints gained from text rank first when averaging over all scenarios. However, DB-BERT and GPTuner still explore a combinatorial search space that combines different hints mined from text. \system explores a much smaller space, consisting only of the few complete configurations generated by the language model. As demonstrated by the experimental results, this enables \system to find promising configurations faster. \system differs from both GPTuner and DB-BERT in that it also recommends indexes (in addition to system parameter settings). Averaging relative costs of the configurations found by \system in the scenarios allowing index creations only, its relative cost decreases to 1.21 (versus 1.41 when averaging over all scenarios). This shows that \system's expanded tuning scope provides additional benefits. UDO is another baseline that optimizes for physical design and parameter settings at the same time. However, lacking hints from text to heuristically guide tuning choices, it converges to optimal decisions more slowly. LlamaTune achieves the best result for TPC-H on PostgreSQL with a scaling factor of 10 and near-optimal results in a few other scenarios. However, LlamaTune does not heuristically constrain the search space via mined tuning hints and suffers from configurations with high run times in some scenarios. LlamaTune has been previously evaluated on OLTP workloads only, using a fixed evaluation time. This scenario makes mechanisms that limit evaluation time spent with bad configurations and OLAP workloads (such as the one implemented by \system) unnecessary. ParamTree only explores parameter settings with a limited scope (i.e., for the query optimizer), preventing it from tuning other parameters with significant performance benefits.

Table~\ref{tab:aggregates} only considers the quality of the best configuration found over the entire tuning time, Figures~\ref{fig:exp:no-indexes} and \ref{fig:exp:no-indexes} provide more details, showing that \system tends to find optimal configuration also significantly faster, compared to other baselines. This demonstrates the benefits of evaluating complete configurations only.

 \begin{figure}
     \centering
     \includegraphics[width=0.95\linewidth]{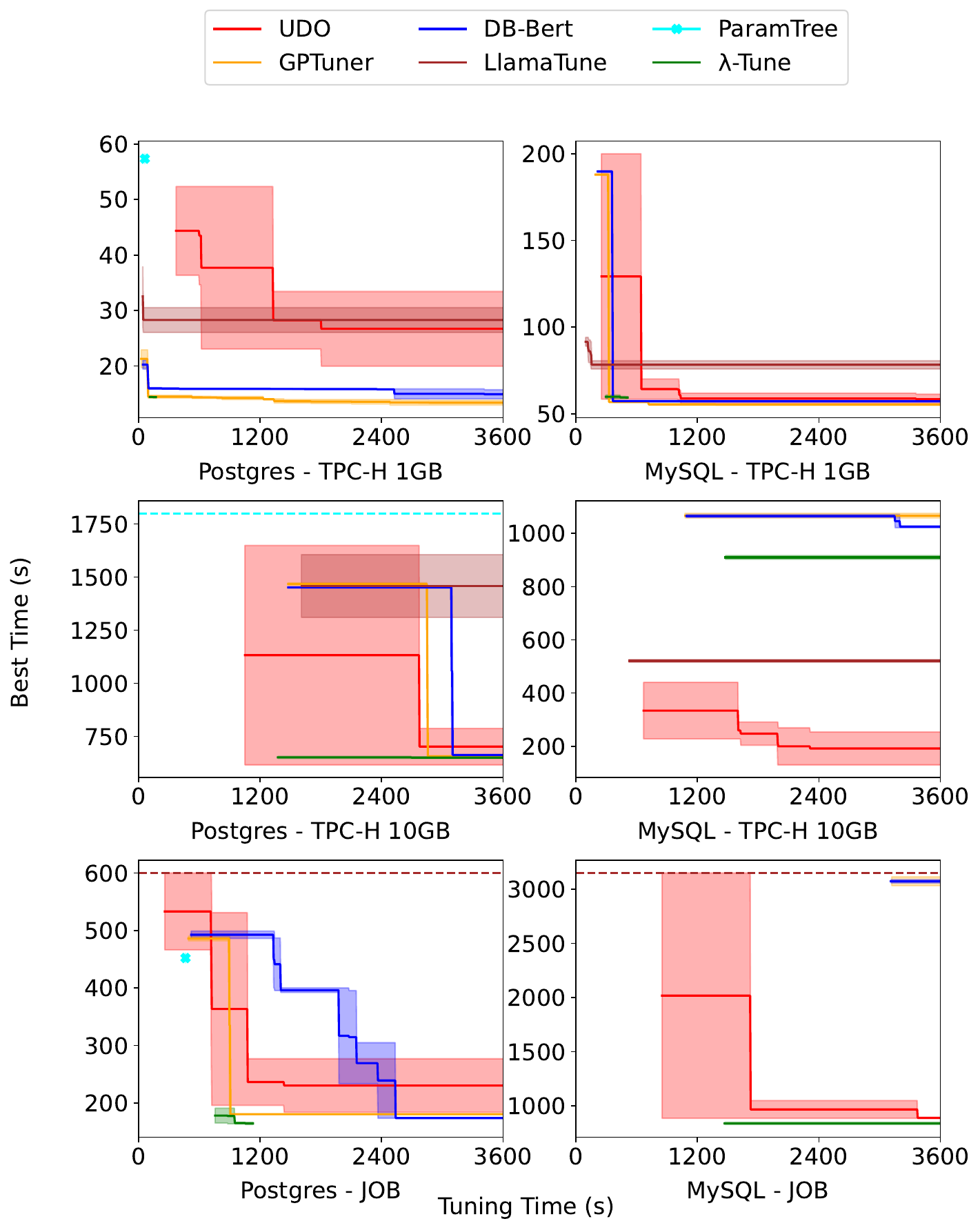}
     \caption{Scenario 1: Baselines do not Create Indexes (Pure Parameter Tuning), Default Indexes Available}
     \label{fig:exp:original-indexes}
 \end{figure}

\begin{figure}
 \centering
 \includegraphics[width=1\linewidth]{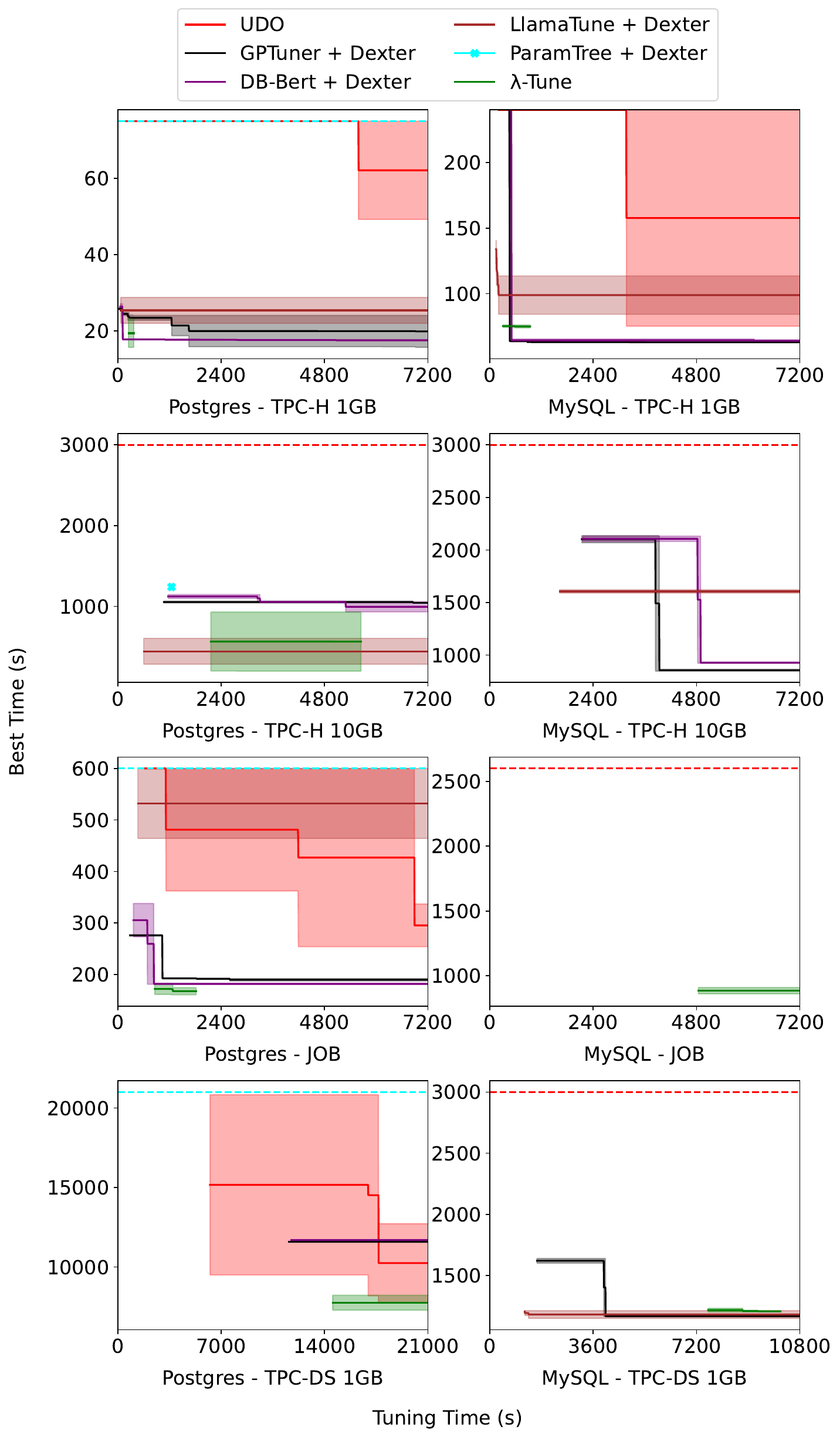}
 \caption{Scenario 2: Baselines Create Indexes, no Indexes are Created by Default}
 \label{fig:exp:no-indexes}
\end{figure}

\begin{table*}[h!]
\small
\caption{Number of Configurations Evaluated per Baseline (Postgres)\label{tab:NrEvaluated}}
\begin{tabular}{|l|l|r|r|r|r|r|r|}
\hline
Scenario   & Initial Indexes & \multicolumn{1}{l|}{\system} & \multicolumn{1}{l|}{UDO} & \multicolumn{1}{l|}{DB-Bert} & \multicolumn{1}{l|}{GPTuner} & \multicolumn{1}{l|}{LlamaTune} & \multicolumn{1}{l|}{ParamTree} \\ \hline
TPC-H 1GB  & Yes             & 5                               & 617                      & 115                          & 103                          & 10                        & 1                              \\ \hline
TPC-H 1GB  & No              & 5                               & 707                      & 171                          & 156                          & 19                        & 1                              \\ \hline
TPC-H 10GB & Yes             & 5                               & 83                       & 6                            & 3                            & 1                         & 1                              \\ \hline
TPC-H 10GB & No              & 5                               & 120                      & 6                            & 5                            & 4                         & 1                              \\ \hline
\end{tabular}
\end{table*}

\begin{table}[h]
    \small
    \centering
    \caption{Best \system Configuration for TPC-H 1GB (Postgres)}
    \begin{tabular}{|l|l|l|}
        \hline
        \textbf{Parameter} & \textbf{Category} & \textbf{Value} \\ \hline
        \verb|shared_buffers| & Memory & 15GB \\ \hline
        \verb|work_mem| & Memory & 1GB \\ \hline
        \verb|effective_cache_size| & Optimizer & 45GB \\ \hline
        \verb|maintenance_work_mem| & Memory & 2GB \\ \hline
        \verb|checkpoint_completion_target| & Logging & 0.9 \\ \hline
        \verb|wal_buffers| & Logging & 16MB \\ \hline
        \verb|default_statistics_target| & Optimizer & 100 \\ \hline
        \verb|random_page_cost| & Optimizer & 1.1 \\ \hline
        \verb|effective_io_concurrency| & IO & 200 \\ \hline
    \end{tabular}
    
    \vspace{0.25cm}
    
    \begin{tabular}{|l|l|}
        \hline    
        \textbf{Table} & \textbf{Indexed Columns} \\ \hline
        lineitem & l\_suppkey, l\_orderkey, l\_partkey \\ \hline
        orders   & o\_custkey, o\_orderkey \\ \hline
        part     & p\_partkey \\ \hline
        partsupp & ps\_partkey, ps\_suppkey \\ \hline
        supplier & s\_nationkey, s\_suppkey \\ \hline
        customer & c\_custkey, c\_nationkey \\ \hline
        nation   & n\_nationkey, n\_regionkey \\ \hline
    \end{tabular}
    \label{tab:postgres_combined_with_indexes}
\end{table}

\begin{figure}
    \centering
    \includegraphics[width=1\linewidth]{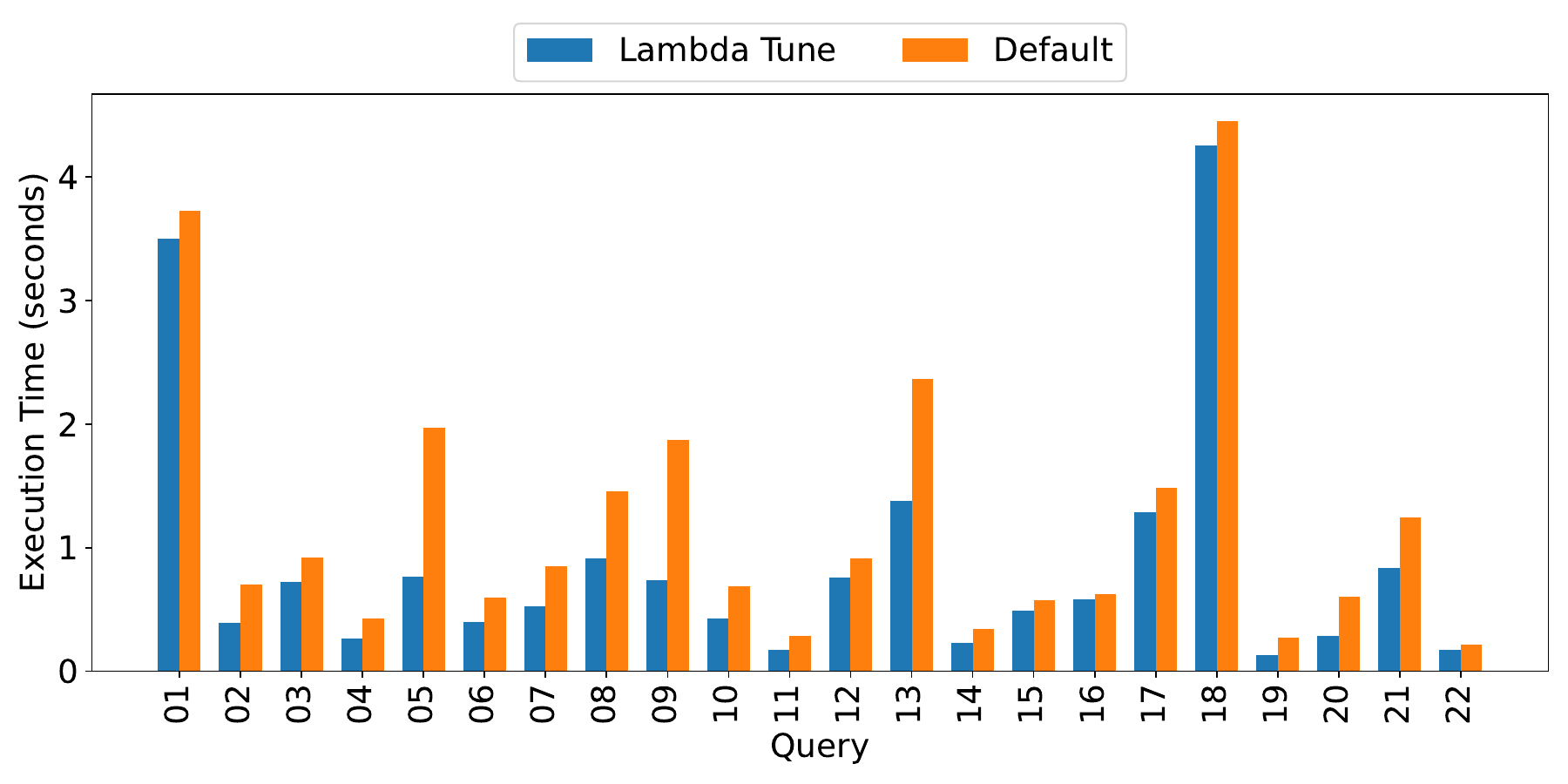}
    \caption{Query Execution Times (TPC-H 1GB, Postgres): \system vs Default Configuration}
    \label{fig:exp:query-times}
\end{figure}

\subsection{In-Depth Analysis}

We analyze the results for one benchmark, TPC-H on Postgres, in more detail. Table~\ref{tab:postgres_combined_with_indexes} shows the configuration selected by \system in detail. The upper part of Table~\ref{tab:postgres_combined_with_indexes} shows changes to default parameter settings, recommended by \system. The parameters are classified according to broad categories (e.g., memory and optimizer). Clearly, compared to the default settings, \system increases the values for several parameters representing the amount of memory reserved for the database management system. This is often beneficial for OLAP workloads. Note that the settings for the \verb|shared_buffers| parameter matches the recommendation from the Postgres manual, stating ``a reasonable starting value for shared\_buffers is 25\% of the memory in your system.''. It seems \system is able to apply this recommendation to the hardware specification of the target system, featuring 61~GB of RAM, as outlined in the hardware description of the input prompt. 

Besides memory-related parameters, \system changes several settings for optimizer-related parameters. In particular, several changes, namely increasing the value of \verb|effective_cache_size| and decreasing the value of \verb|random_page_cost|, compared to the default settings, tend to motivate the query optimizer of Postgres to use indexes more often. This aligns with the fact that \system proposes multiple indexes as part of the same recommended configuration. 

Also, \system increases the value of \verb|effective_io_concurrency|, compared to the default setting. This change has been reported to occasionally lead to significant performance improvements for large scans\footnote{\url{https://stborden.wordpress.com/2022/12/27/tuning-the-postgresql-effective_io_concurrency-parameter/}}. A subset of parameter changes refer to logging behavior which is less relevant for the benchmark, showing that \system might benefit from additional workload-related information in the prompt.

The lower part of Table~\ref{tab:postgres_combined_with_indexes} reports the indexes created by \system, reporting the table and the corresponding column (all indexes created by \system for Postgres were tree indexes with single-column search keys). Clearly, \system focuses on columns and tables that appear frequently in the query workload.

Index recommendations are generally not transferable between workloads on different databases. However, system parameter settings could, in principle, be transferred between different OLAP workloads. We compare the parameter settings proposed by \system across the three benchmarks (TPC-H, TPC-DS, and JOB). It turns out that settings for several memory-related parameters tend to be similar (the optimal configurations use the same settings for \verb|shared_buffers| and \verb|maintenance_work_mem|). Also, whenever \system adds commands for creating indexes, it encourages the optimizer to use them by changing values for \verb|effective_cache_size| and \verb|random_page_cost|. On the other hand, settings for other parameters differ. For instance, only for TPC-DS does \system choose to increase the \verb|max_parallel_workers_per_gather| parameter value beyond the default. The recommendations generated by the LLM for TPC-H tend to overlap in many aspects (e.g., they consistently set \verb|shared_buffers|) but in 15 LLM samples for the TPC-H prompt, we observe outlier configurations where the run time is up to five times higher than the optimum. This underlines the need for configuration selection with bounded execution time.

Table~\ref{tab:NrEvaluated} reports the number of trial runs executed by each baseline. This metric shows significant differences between the different baselines. ParamTree recommends fixed settings for optimizer constants, requiring only a single workload evaluation (after training). \system evaluates only the five configurations proposed by the LLM. For TPC-H with scaling factor one, this makes \system one of the most sample-efficient baselines, followed closely by LlamaTune (which increases sample-efficiency via dimensionality reduction). DB-BERT and GPTuner evaluate a significantly higher number of configurations for scaling factor one. UDO evaluates most configurations. However, UDO performs each evaluation with a workload sample (i.e., it does not run all of the queries), meaning that its measurements are not always representative for the quality of a configuration on the full workload.
When increasing the scaling factor to 10, the number of trial runs decreases significantly for all baselines except for ParamTree and \system. This is expected as each run takes longer, due to the increasing input data size.

Finally, we compare per-query execution times between the default settings and the configuration chosen by \system for TPC-H. Figure~\ref{fig:exp:query-times} reports corresponding results. It turns out that the performance gain via the configuration proposed by \system translate to gains or at least equal performance, compared to the default settings, for each single query.

\subsection{Ablation Study}
\label{sec:ablation}

To showcase the effectiveness of the multiple individual components of \system, we present the results of an ablation study in which we switched off different components of \system to measure the resulting performance degradation. The experiments presented next focus on tuning Postgres for the JOB benchmark. Figure~\ref{fig:ablation-postgres} depicts the results of our study. The green line (labeled ``Default'') represents the performance of \system with all of its components enabled. Next, we discuss the impact of specific changes to \system.

\subsubsection{Adaptive Timeout}
First, we turn off the component that adaptively sets the timeout of the configuration selection component according to the index creation overheads, discussed in detail in Section~\ref{sec:config_select} (Reconfiguration Overheads). Doing so prevents the system from taking into account reconfiguration overheads when choosing timeouts for query execution. This means that query execution overheads may be dominated by reconfiguration overheads, making the tuning approach inefficient. Indeed, as shown in the plot, this change increases the time needed to find near-optimal configurations from less than 1,000 to over 1,300 seconds. While reconfiguring the system too frequently slows down tuning, it does not decrease the quality of the configurations found by \system.

\subsubsection{Query Scheduler}
Next, we showcase the effectiveness of our query scheduler. In turning this component off, we disable \system's capability to minimize index creation overheads by optimally ordering queries and creating indexes only if they are immediately relevant. Turning off this component increases the time until the first configuration is completely evaluated (i.e., all queries have finished processing) to more than 1,800 seconds. While this change affects the time needed by \system to report first evaluation results for its configurations, it does not degrade the quality of the configurations that \system ultimately returns.

\subsubsection{Obfuscated Workload}

JOB and TPC-H are popular workloads that are likely to appear in the pre-training data of LLMs such as GPT-4. Hence, apriori, it is unclear whether LLMs generate configurations that appear on the Web. If so, obfuscating the workload should significantly impact \system's performance. To test that hypothesis, we hid the names of tables and columns in the input workload. We replaced all table and column names in the extracted query snippets with generic identifiers (e.g., ``Tx'' and ``Cy'' where $x$ and $y$ are integer IDs and ``T'' and ``C'' indicate tables and columns respectively). It is worth to mention that we obfuscate the snippets after the extraction (we do not provide full queries to the LLM), and thus, it could not imply the benchmark from the query templates as this information is lost after the snippet extraction. The results are reported as ``Obfuscated Workload'' in Figure~\ref{fig:ablation-postgres}. However, the results are inconsistent with our hypothesis since the performance remains virtually equivalent to the default settings. This provides evidence that \system does not benefit from hints in the pre-training data.

\subsubsection{Compressor}
Describing the input workload by submitting SQL queries as part of the prompt may seem like the most natural approach. Instead, we opt to use a component that compresses the join structure of the input workload. We evaluate performance when switching that compressor off, adding instead as many SQL queries as we can fit into the prompt with the intrinsic token limits of the LLM. For this specific workload, we are able to fit in 26 full SQL queries. However, doing so increases the time until the first configuration is completely evaluated and also increases the execution time of the best configuration found. Furthermore, the number of input tokens, and therefore processing fees due to LLM invocations, increase (this is analyzed in more detail in the plot discussed next).

\subsubsection{Token Budget}
Figure~\ref{fig:ablation-compressor} analyzes the impact of the prompt structure and token limits in more detail. All approaches are labeled with the number of tokens consumed for workload representation (between parentheses). Clearly, adding full SQL queries does not yield optimal performance. However, it is remarkable that \system is able to generate near-optimal configurations for fairly small token budgets. Only extremely low settings for the token limit (196 tokens) seem to degrade performance significantly. Compared to sending full SQL queries to the LLM, \system's compression method achieves better performance even with a token reduction of more than factor ten. These results demonstrate that \system is able to prioritize effectively which information about the input workload to convey to the LLM.

\begin{figure}
    \centering
    \includegraphics[scale=0.3]{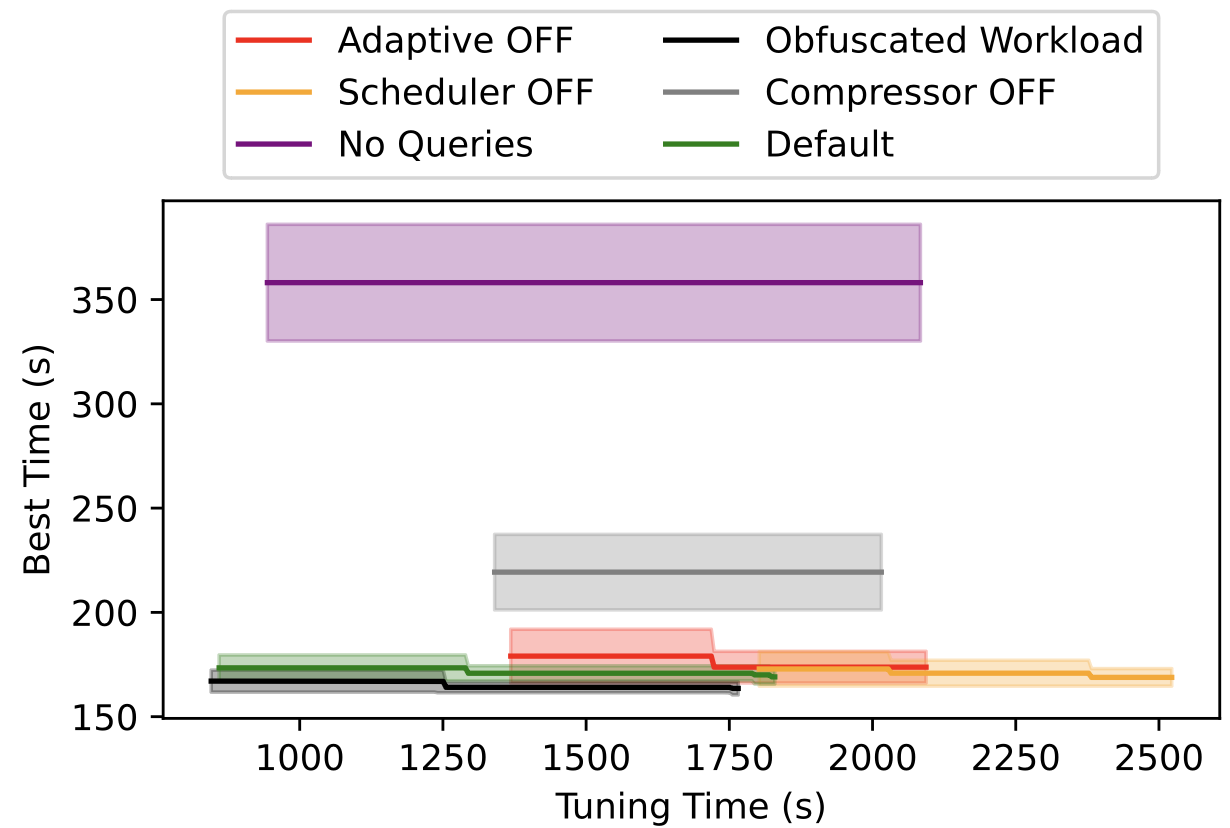}
    \caption{Ablation - JOB, Postgres, No Indexes}
    \label{fig:ablation-postgres}
\end{figure}

\begin{figure}
    \centering
    \includegraphics[width=0.85\linewidth]{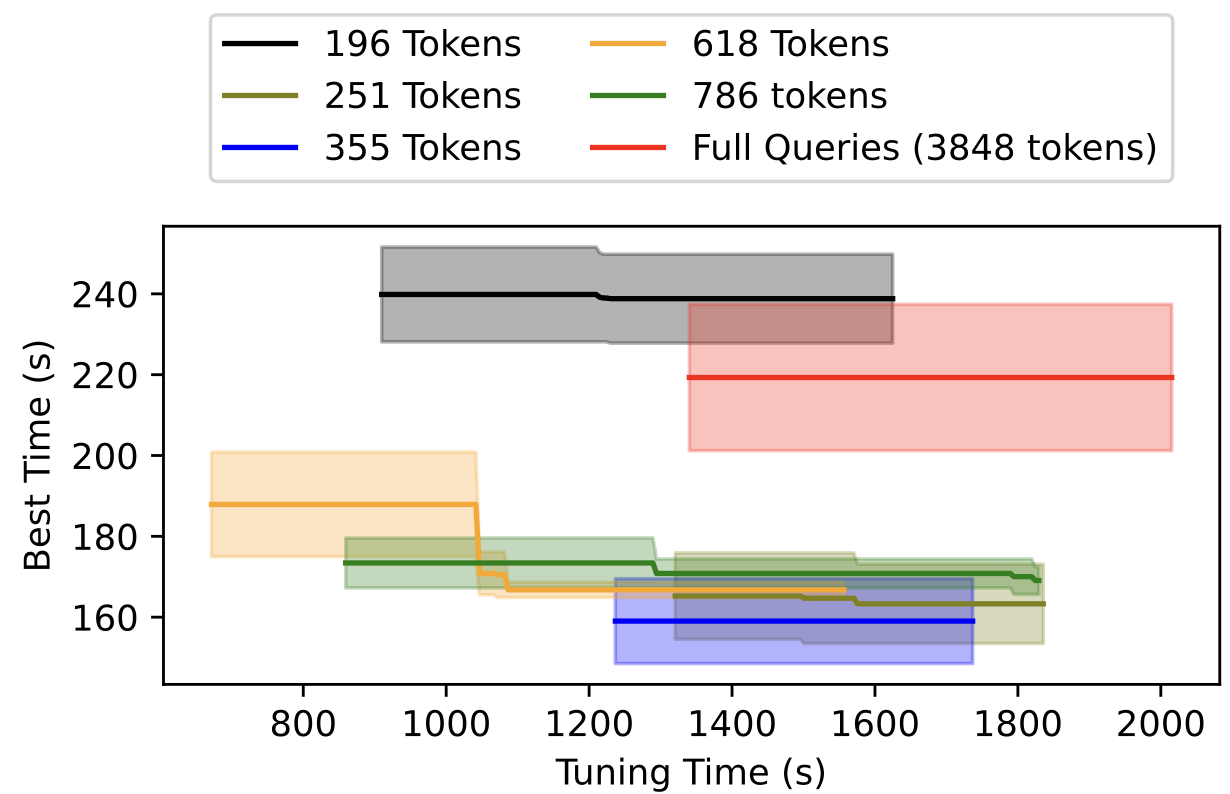}
    \caption{Ablation - Compressor Budget}
    \label{fig:ablation-compressor}
\end{figure}

\begin{figure}
    \centering
    \includegraphics[width=0.85\linewidth]{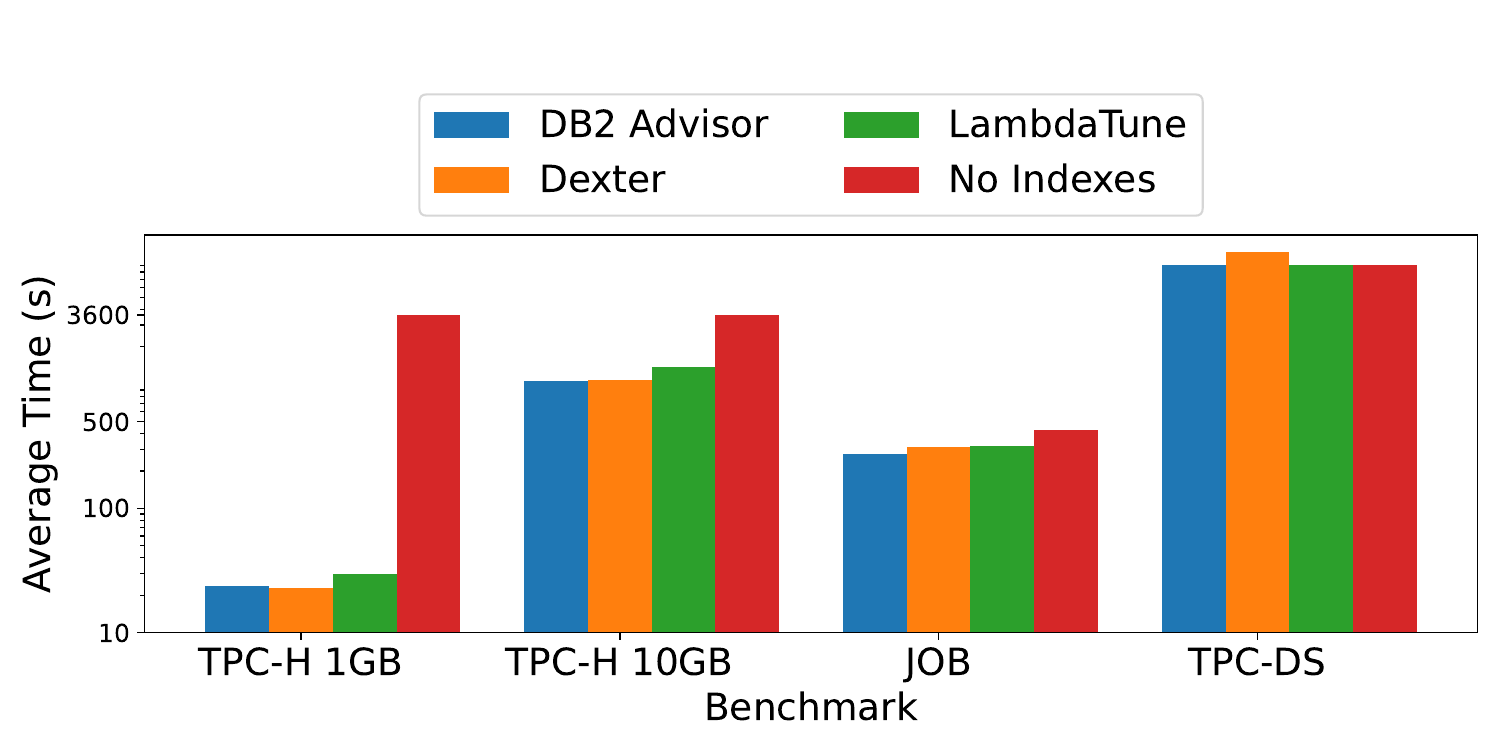}
    \caption{Comparing Index Recommendation Tools}
    \label{fig:exp:index-recommenders}
\end{figure}

\subsubsection{Index Recommendations}
Finally, Figure~\ref{fig:exp:index-recommenders} removes \system's ability to change parameter settings, focusing on index recommendation alone. The figure compares the performance of \system to different index recommendation tools (Dexter and the DB2 Index Advisor), as well as to a configuration without indexes (using the default values for all system parameters). The y-axis reports execution time and is logarithmic. Using the indexes recommended by \system reduces run time significantly, compared to the default settings. In most cases, with the exception of TPC-DS, \system does not achieve the performance of specialized index recommendation tools. This is expected, as \system has a broader scope.

\section{Related Work}
\label{sec:related}
\label{related}
\system leverages recent advances in language models and follows the tradition of learned tuning frameworks. Our related work falls into two categories:

\textbf{DBMS Tuning.} Research efforts have focused on automating various aspects of database systems, such as query optimization~\cite{Selinger1979, zhu2023lero, marcus2022bao, anneser2023autosteer, yang2022balsa, giannakouris2022building, Yang2023, zhang2023simple}, configuration tuning~\cite{Aken, pavlo2017self, wang2021udo, kanellis2020too, kanellis2022llamatune}, and index recommendations~\cite{ding2019ai, Kanellis2022, Valentin2000a}. Traditional tuning methods~\cite{Selinger1979, Valentin2000a, Kanellis2022} are based on cost models, leveraging data statistics to estimate processing overheads with different tuning options. This may lead to sub-optimal tuning choices in case of cost estimation errors. Also, it fails for scenarios such as system parameter tuning where cost models are generally unavailable. Systems such as OtterTune~\cite{Aken}, LlamaTune~\cite{Kanellis2022}, or the ParamTree method~\cite{Yang2023} use machine learning for database tuning, guided by performance measurements via trial runs. LlamaTune exploits techniques for dimensionality reduction while ParamTree exploits existing cost models to reduce the number of required training samples. Instead, \system leverages information contained in text documents as a means to reduce the number of required trials in database tuning. \system's evaluation component uses timeouts to reduce evaluation overheads. This connects to prior methods used to reduce evaluation overheads, e.g., by selecting subsets of queries~\cite{wang2021udo, Siddiqui2022a} or by substituting calls to a classical cost model with invocations of cheaper models~\cite{Kanellis2022}. However, \system reduces evaluation overheads only via timeouts, set in order to guarantee that the system identifies the optimal configuration on the entire workload, out of all configuration generated by the language model.

\textbf{Large Language Models.} Recent work used LLMs for database system performance debugging~\cite{singh2024panda} and tuning~\cite{trummer2022db, lao2023gptuner, thakkar2024can}, with DB-BERT and GPTuner being the most relevant to our work. Both systems use LLMs to extract single hints but must navigate a vast, combinatorial space of tuning hints, requiring numerous trials to find an optimal configuration. \system becomes more efficient by using the LLM to generate entire configurations. This means we avoid combinatorial search as the LLM already combines hints about different tuning knobs. Also, this approach introduces new challenges, compared to prior work, e.g., in reducing the amount of tokens that are sent to the LLM to describe the tuning context.
\section{Conclusion}
\label{sec:conclusions}

This paper tested the following hypothesis: \emph{Sampling the output distribution of state-of-the-art LLMs, given a database tuning problem as input, yields at least some efficient configurations.} Our experiments support this hypothesis, even for newly generated workloads that do not appear in the LLM training data. Also, our experiments show that this approach is only effective if LLMs are embedded into a framework that \emph{bounds query evaluation overheads to deal with bad configurations} (via the configuration selection component discussed in Section~\ref{sec:config_select}), \emph{minimizes system configuration overheads} (via the smart evaluation component discussed in Section~\ref{sec:config_eval}), and \emph{limits monetary fees due to LLM invocations} (via the prompt compression approach discussed in Section~\ref{sec:prompt-gen}). The resulting system, \system, outperforms a variety of baselines, including several previously proposed approaches that exploit LLMs for database tuning as well.


\bibliographystyle{ACM-Reference-Format}
\bibliography{sample}

\balance

\end{document}